\def\R{{\bf R}}
\newtheorem{theorem}{Theorem}[section]
\newtheorem{example}[theorem]{Example}
\newtheorem{Them}{Theorem}[section]
\newtheorem{Remark}{Remark}[section]
\newtheorem{Lemma}{Lemma}[section]
\begin{document}
\makeatletter
\def\@setauthors{%
\begingroup
\def\thanks{\protect\thanks@warning}%
\trivlist \centering\footnotesize \@topsep30\p@\relax
\advance\@topsep by -\baselineskip
\item\relax
\author@andify\authors
\def\\{\protect\linebreak}%
{\authors}%
\ifx\@empty\contribs \else ,\penalty-3 \space \@setcontribs
\@closetoccontribs \fi
\endtrivlist
\endgroup }
\makeatother
 \baselineskip 18pt
\title[{\small Optimal risk  policy with constraints}]
 { { optimal dividend and investing control of a
 insurance company  with higher solvency constraints }}
 \author[{ \bf Zongxia Liang and Jianping Huang      } ]
{ Zongxia  Liang \\ Department of Mathematical Sciences, Tsinghua
University, Beijing 100084, China. Email:
zliang@math.tsinghua.edu.cn \\
Jianping  Huang \\ Department of Mathematical Sciences, Tsinghua
University, Beijing 100084, China. Email: jp.huang08@gamil.com  }
 \noindent
\begin{abstract}
This paper considers  optimal control problem of a large insurance
company under a fixed  insolvency probability. The company controls
proportional reinsurance rate, dividend pay-outs and investing
process to maximize the expected present value of the dividend
pay-outs until the time of bankruptcy. This paper aims at describing
the optimal return function as well as the optimal policy. As a
by-product, the paper theoretically sets a risk-based capital
standard to ensure the capital requirement of can cover the total
risk.
 \vskip 10 pt
\noindent {\sl MSC}(2000): Primary 91B30, 91B28, 93E20 ; Secondary
60H10, 60H30, 60H05.
 \vskip 10pt
 \noindent
 {\sl Keywords:} Optimal dividend  policy; Optimal return function; Solvency;
 Stochastic regular-singular control; Proportional reinsurance;
 Probability of bankruptcy; Stochastic differential equations.
\end{abstract}
 \maketitle
 \setcounter{equation}{0}
\section{{\bf Introduction}}
\vskip 15pt\noindent
 In this paper we consider  optimal control problem of a
 large insurance company in which the dividend pay-outs, investing
 process and the risk exposure are controlled by management.
 The investing  process in a financial market may contain an element of
 risk, so it will impact security and solvency of the company
 (see Theorem \ref{theorem41} below).
Moreover, the company has a minimal reserve  as its guarantee fund
to protect insureds and attract sufficient number of policy holders.
We assume that the company can only reduce its risk exposure by
proportional reinsurance policy for simplicity. The objective of the
company is to find a policy, consisting of risk control and dividend
payment scheme, which maximizes the expected total discounted
dividend pay-outs until the time of bankruptcy. This is a mixed
regular-singular control problem on diffusion model which
 has been a renewed interest recently, e.g.He and Liang\cite{he and
Liang} and references therein, H{\o}jgaard and Taksar
\cite{s104,s103, s1012}, Harrison and Taksar \cite{s10101}, Paulsen
and Gjessing \cite{s1015},  Radner and
 Shepp \cite{s1018}. Optimizing dividend pay-outs is a
 classical problem in actuarial mathematics, on which earlier work
 is given in e.g. Borch\cite{s1010,s109} and
Gerber\cite{s1011}. We notice that  some of these papers  seem not
to  take security and solvency into consideration
 and so the results therein may not be commonly used in practice
 because the insurance business
  is a business {\sl affected with a public interest},
  and insureds and policy-holders should
  be protected against insurer insolvencies
  (see Williams and Heins\cite{law}(1985), Riegel and
  Miller\cite{law2}(1963), and
   Welson and Taylor\cite{law3}(1959)).
   The policy,   making the company go
bankrupt before termination of contract between insurer and policy
holders or the policy of low solvency(see  \cite{law1}), is not the
best way and should be prohibited even though it can win the highest
profit. Therefore, {\sl one of our  motivations } is to consider
optimal control problem of a large insurance company under higher
 solvency and security,  and to find the best equilibrium
policy between making  profit and improving security.
 \vskip 15pt \noindent
Unfortunately, there are very few results concerning on  optimal
control problem of a large insurance company  based on higher
solvency and security. Paulsen\cite{pau}  studied this
kind of optimal controls for diffusion model via properties of
return function, some of our results  somewhat like that of the
\cite{pau}, but both  approaches used  are very different. He, Hou
and Liang\cite{HHL} investigated the optimal control problem for
linear Brownian model. However, we find that the case treated in
the \cite{HHL} is a trivial case, that is, the company of the
 model in the \cite{HHL} will never go to bankruptcy, it is an ideal
 model in concept, and it indeed does not exist in
 reality(see Theorem \ref{theorem42} below). Because
 probability of bankruptcy for the model treated in the present paper
  is very large
 (see Theorem \ref{theorem41} below),
  our results can not be directly deduced  from  the \cite{HHL}.
  Therefore, to solve these the problems we need to use initiated idea from
    the \cite{HHL}, stochastic analysis and PDE method  to
  establish a complete setting for  further discussing
  optimal control problem  of a large insurance company under
 higher solvency and  security in which the dividend pay-outs, investing
 process and the risk exposure are controlled by management.
 This is {\sl   anther one of  our  motivations.}
 This paper is the first systematic presentation of the topic, and
 the approach here is  rather general, so we
  anticipate that it can deal with
other models. We aim at deriving  the optimal return function, the
optimal retention rate and dividend payout level. The main result of
this paper will be presented in section 3 below. As a by-product,
the paper theoretically sets a risk-based capital standard to ensure
the capital requirement of can cover the total given risk. Moreover,
we also discuss how the risk and minimum reserve requirement affect
the optimal reactions of the insurance company by the implicit types
of solutions and how the optimal retention ratio and dividend payout
level are affected by the changes in the minimum reserve requirement
and risk faced by the insurance company.

 \vskip 15pt \noindent
 The paper is organized as follows: In next section 2 we
 establish a stochastic control model of a large insurance
company. In section 3 we present  main result of this paper and its
economic and financial interpretations, and discuss how the risk and
minimum reserve requirement affects  the optimal retention ratio and
dividend payout level of the insurance company. In section 4 we give
analysis on risk of  stochastic control model treated in the present
paper and study relationships among investment risk, underwriting
risk and the insolvency probability. In section 5 we give some
numerical samples to portray how the risk and minimum reserve
requirement affect dividend payout level of the insurance company.
The proofs of theorems and lemmas which study properties of
probability of bankruptcy and optimal return function will be given
in the appendix. \vskip 10pt\noindent
 \setcounter{equation}{0}
\section{ {\bf Mathematical model}}
\vskip 10pt\noindent To give a mathematical
formulation of the optimization problem
 treated in this paper, let $(\Omega, \mathcal {F}, \{ \mathcal
{F}_{t}\}_{t\geq 0}, \mathbb{P})$ denote a
filtered probability space. For the  {\sl
intuition} of our diffusion model we start
from the classical Cram\'{e}r-Lundberg
model of a reserve(risk) process. In this
model claims arrive according to a Poisson
process $N_t$ with intensity $\nu $ on
$(\Omega, \mathcal {F}, \{ \mathcal
{F}_{t}\}_{t\geq 0}, \mathbb{P})$. The size
of each claim is $U_i$. Random variables
$U_i$ are i.i.d. and are independent of the
Poisson process $N_t$ with finite first and
second moments given by $\mu$ and $
\sigma^2$ respectively. If there is no
reinsurance, dividend pay-outs or
investments, the reserve (risk) process of
insurance company is described by
$$ r_t= r_0 +pt-\sum^{N_t}_{i=1} U_i,$$
where $p$ is the premium rate. If $\eta > 0 $ denotes the {\sl
safety loading}, the $p$ can be calculated via the expected value
principle as
$$ p=(1+\eta ) \nu \mu. $$
In a case where the insurance company shares risk with the
reinsurance, the sizes of the claims held by the insurer become
$U^{(a)}_i $, where $a$ is a (fixed) retention level.  For
proportional reinsurance, $a$ denotes the fraction of the claim
covered by cedent. Consider the case of {\sl  cheap reinsurance} for
which the reinsuring company uses the same safety loading as the
cedent, the reserve process of the cedent is given by
$$ r^{(a,\eta )}_t=u + p^{(a,\eta )}t - \sum^{N_t}_{i=1} U^{(a)}_i, $$
where $p^{(a,\eta)}=(1+\eta )\nu \mathbb{E}\{U^{(a)}_i\}. $ Then as
$\eta \rightarrow 0  $
\begin{eqnarray}\label{xx1}
 \{\eta r^{(a,\eta )}_{t/\eta^2 }\}_{t\geq
0}\stackrel{D}{\rightarrow} BM(\mu (a)t, \sigma^2(a)t    )
\end{eqnarray}
in $\mathcal{D}[0, \infty)   $ (the space of right continuous
functions with left limits endowed with the skorohod topology),
where
$$ \mu (a)= \nu \mathbb{E}\{  U^{(a)}_i \}, \qquad
 \sigma^2(a)=\nu \mathbb{E}\{  U^{(a)}_i \}^2, $$
and $ BM ( \mu, \sigma^2)$ stands for Brownian motion with the drift
coefficient $\mu $ and diffusion coefficient $\sigma $ on $(\Omega,
\mathcal {F}, \{ \mathcal {F}_{t}\}_{t\geq 0}, \mathbb{P})$. The
passage to the limit works well in the presence of a big
portfolios. We refer the reader for this fact and for the specifies
of the diffusion approximations to Emanuel,Harrison and Taylor\cite{C1}(1975),
 Grandell\cite{C2}(1977), Grandell\cite{C3}(1978), Grandell\cite{C4}(1990),
Harrison\cite{C5}(1985), Iglehart\cite{C6}(1969), and Schmidli\cite{C7}(1994).
\vskip 10pt\noindent Throughout this paper we consider the retention
level to be the control parameter selected at each time $t$ by the
insurance company. We denote this value by $a(t)$. If there is no
dividend pay-outs or investments, in view of (\ref{xx1}), we can
assume that   in our model the reserve process $\{R_t\}$ of the
insurance company is given by
\begin{eqnarray*}\label{f1} dR_t=a(t)\mu dt+a(t)\sigma d W_t^1,
\end{eqnarray*}
where $ U^{(a)}_i =aU_i$, $\mu (a)= a \mathbb{E}\{U_i\}$ and $
\sigma^2 (a)=a^2\sigma^2$. And the reserve  invested in a financial
asset is the price process $ \{P_t\}$ governed by
\begin{eqnarray*}
dP_t=rP_tdt+\sigma_p P_tdW_t^2,
\end{eqnarray*}
where $r>0$, $\sigma_p\geq 0$, $\{W_t^1\}_{t\geq 0}$ and
$\{W_t^2\}_{t\geq 0}$ are  two independent standard  Brownian
motions on  $(\Omega, \mathcal {F}, \{ \mathcal {F}_{t}\}_{t\geq 0},
\mathbb{P})$. The case of $\sigma_p=0$ corresponds to the situation
where only risk free assets, such as bonds or bank accounts are
used for investments.
 \vskip 10pt\noindent
 A  policy $\pi$ is  a pair of non-negative c\`{a}dl\`{a}g $ \mathcal
{F}_{t}$-adapted  processes $\{a_\pi (t),L_t ^\pi\}$, where $a_\pi
(t)$ corresponds to the  risk exposure at time $t$ and $L_t ^ \pi$
corresponds to the  cumulative amount of dividend pay-outs
distributed up to time $t$. A policy $\pi =\{a_\pi (t),L_t ^\pi\} $
is called admissible if $ 0\leq a_\pi (t)\leq 1 $ and $L_t ^\pi$ is
a nonnegative, non-decreasing, right-continuous function. When $\pi
 $ is applied, the resulting reserve process is denoted by $\{ R_t^\pi
 \}$. We assume that  the initial reserve $R^\pi_0$ is  a
 deterministic value  $x$. In view of independence of $W^1$ and $W^2$,
 the dynamics for  $R_t^\pi$ is given by
 \begin{eqnarray}\label{f2}
dR_t^\pi=(a_\pi(t)\mu+rR_t^\pi
)dt+\sqrt{a_\pi^2(t)\sigma^2+\sigma_p^2\cdot (R_t^\pi) ^2}\  d
W_t-dL_t^\pi,
\end{eqnarray}
where $\{W_t\}$ is a standard Brownian motion on $(\Omega, \mathcal
{F}, \{ \mathcal {F}_{t}\}_{t\geq 0}, \mathbb{P})$. Moreover, we
suppose that  the insurance company  has a minimal reserve $m$ as
its guarantee fund to protect insureds and attract sufficient number
of policy holders, that is, the company needs to keep its reserve
above $m$. The company is considered bankrupt as soon as the reserve
falls below $ m $.  We define the time of bankruptcy by $\tau ^\pi
_x =\inf\{t\geq 0: \R_t^ {\pi}\leq m \}$. Obviously, $\tau ^\pi_x $
is an $\mathcal {F}_{t}$ -stopping time. \vskip 10pt\noindent
 We denote by $\Pi$ the set of all admissible
policies. For any $b\geq 0$ , let $\Pi_b=\{\pi\in \Pi : \int _0 ^{
\infty }\mathcal{1}_{\{s:R^\pi(s)<b\}}dL_s^{\pi}=0\} $. Then it is
easy to see  that $\Pi=\Pi_0$ and $b_1>b_2\Rightarrow
\Pi_{b_1}\subset \Pi_{b_2}$. For a given admissible policy $\pi$ we
define the optimal return function $V(x)$ by
\begin{eqnarray}\label{f3}
J(x,\pi)&=&\mathbb{E}\big \{\int_0^{\tau^\pi_x } e^{-ct} dL_t^\pi\big\},\nonumber \\
 V(x,b )&=&\sup_{\pi \in \Pi_b}\{J(x,\pi)\},
\end{eqnarray}
\begin{eqnarray}\label{f4}
 V(x)&=&\sup_{b \in \mathfrak{B}}\{  V(x,b )  \}
\end{eqnarray}
and the optimal policy $\pi^* $ by
\begin{eqnarray}\label{f5}
J(x,\pi^*)= V(x),
\end{eqnarray}
 where
  \begin{eqnarray*}
 \mathfrak{B}:=\big \{b\ :\ \mathbb{P}[\tau_{b}^{\pi_{b}} \leq T] \leq
\varepsilon \ , \  J(x, \pi_b)= V(x,b) \mbox{ and} \ \pi_b \in
\Pi_b\big\},
  \end{eqnarray*}
 $c>0$ is a discount rate, $\tau_b^{\pi_b}$ is
 the time of bankruptcy $\tau_x^{\pi_b} $  when
the initial reserve $x=b$ and the control policy is $\pi_b$.
$1-\varepsilon$ is the standard of security and less than solvency
for given $\varepsilon>0 $. \vskip 15pt\noindent {\sl The main
purpose of this paper is to find the optimal return function $V(x)$
 and the optimal policy $\pi^*$.}  Throughout this paper
we assume that $r\leq c$ in view of   $V(x)=\infty$ for $r>c$(see
H{\o}jgaard and Taksar \cite{s104}). \vskip 5pt\noindent
 \setcounter{equation}{0}
\section{ \bf Main result   }
\vskip 15pt\noindent
 In this section we first introduce an  auxiliary
 Hamilton-Jacobi-Bellman (HJB) equation, then we present
  main result of this paper, finally we give economic and financial
  interpretations of the main result.
\begin{Lemma}\label{lemma1} Let  $h\in C^2 [m,\infty)$ satisfy  the
following HJB equation
\begin{eqnarray}\label{31}
\max_{a\in[0,1]}\big \{\frac{1}{2}[\sigma^2 a^2+\sigma_p^2
x^2]h''(x)+[\mu a +rx]h'(x)-ch(x)\big  \}=0,\ x\geq m \nonumber\\
\end{eqnarray}
with boundary condition $ h(m)=0$. Then\\
(i) $h'(x)>0 $, $\forall x\geq m$.\\
(ii) There exists a unique $b_0>x_0$ such that $h''(b_0)=0$ and
$(x-b_0)h''(x)>0$ for all $x\geq m$ except $b_0$, where $
x_0=\frac{\sigma^2(1-\alpha^\ast)}{\mu}$, $\alpha^\ast $is a
constant in $(0,1)$.
\end{Lemma}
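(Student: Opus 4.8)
The plan is to carry out the inner maximisation in \eqref{31} explicitly, thereby replacing the HJB equation by two ordinary differential equations, and then to study $h$ by elementary comparison and maximum--principle arguments together with a bootstrap on its derivatives. The $a$--dependent part of \eqref{31}, namely $a\mapsto\frac12\sigma^2a^2h''(x)+\mu a h'(x)$, is concave exactly where $h''(x)<0$, with unconstrained maximiser $\widehat a(x)=-\mu h'(x)/(\sigma^2h''(x))$; elsewhere, and whenever $\widehat a(x)\ge1$, the maximum over $[0,1]$ is attained at $a=1$ (once $h'>0$ is known, part (i)). Hence on the ``reinsurance set'' $\{x\ge m:\ h''(x)<0,\ \mu h'(x)+\sigma^2h''(x)<0\}$ the function $h$ solves $\tfrac12\sigma_p^2x^2h''+rxh'-ch-\mu^2(h')^2/(2\sigma^2h'')=0$, while on the complementary set it solves the linear equation $\tfrac12(\sigma^2+\sigma_p^2x^2)h''+(\mu+rx)h'-ch=0$. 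Evaluating \eqref{31} at $x=m$ with $h(m)=0$ shows that $h'(m)>0$ forces $h''(m)<0$ in either regime, while $h'(m)=0$ would give $h''(m)=0$ and then $h\equiv0$ by uniqueness for the linear equation; so the nontrivial solution has $h'(m)>0$ and $h''(m)<0$.

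For part (i), the key point is that $h$ cannot have an interior local minimum at a point $\bar x>m$ with $h(\bar x)<0$: taking $a=0$ in \eqref{31} gives $\tfrac12\sigma_p^2\bar x^2h''(\bar x)-ch(\bar x)\le0$, which is impossible when $h''(\bar x)\ge0$ and $h(\bar x)<0$. Together with $h(m)=0$ this forces $h\ge0$, hence $h'(m)\ge0$, and then $h'(m)>0$ by the reduction above. If $h'$ vanished somewhere in $(m,\infty)$, let $\bar x$ be the smallest such point; then $h'>0$ on $(m,\bar x)$, so $h(\bar x)>0$ and $h''(\bar x)\le0$, and \eqref{31} at $\bar x$ (where the terms carrying $h'(\bar x)=0$ drop out) reads $\max_{a\in[0,1]}\tfrac12(\sigma^2a^2+\sigma_p^2\bar x^2)h''(\bar x)=ch(\bar x)>0$; but since $h''(\bar x)\le0$ the left side is $\le0$, a contradiction. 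Hence $h'>0$ on $[m,\infty)$.

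For part (ii), I would identify $b_0$ as the first zero of $h''$. For existence, if $h''<0$ throughout, then $h$ is increasing and concave, its far--field behaviour is governed by the linear equation, and comparison with the Euler equation $\tfrac12\sigma_p^2x^2h''+rxh'-ch=0$ (whose dominant power solution grows at least linearly, since $r\le c$) forces $\tfrac12\sigma_p^2x^2h''$ to become positive for large $x$, a contradiction; so set $b_0:=\inf\{x\ge m:h''(x)=0\}$, which satisfies $b_0>m$ and $h''<0$ on $[m,b_0)$. Near $b_0$ the nonlinear regime is excluded (as $h''(b_0)=0$), so $h$ satisfies the linear equation there, and differentiating it and using $h''(b_0)=0$ gives $h'''(b_0)=2(c-r)h'(b_0)/(\sigma^2+\sigma_p^2b_0^2)\ge0$, with strict inequality when $r<c$. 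Applying the same identity at any further zero of $h''$ (which again cannot lie in the nonlinear regime) yields $h'''>0$ there, contradicting the nonpositive sign that $h'''$ must have at a zero of $h''$ approached from a region where $h''>0$; hence $h''>0$ on $(b_0,\infty)$, $b_0$ is the unique zero of $h''$, and $(x-b_0)h''(x)>0$ for all $x\ge m$ with $x\ne b_0$. Finally $b_0>x_0$: at $x_0=\sigma^2(1-\alpha^\ast)/\mu$ the retention constraint $\widehat a\le1$ is binding, i.e. $\mu h'(x_0)+\sigma^2h''(x_0)=0$, so $h''(x_0)=-\mu h'(x_0)/\sigma^2<0$ by part (i); since $h''$ is negative precisely on $[m,b_0)$ this places $x_0$ in $[m,b_0)$, i.e. $x_0<b_0$. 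The main obstacle is exactly this last part: the uniqueness of the inflection point, the borderline case $r=c$ (where $h'''(b_0)=0$ and one must pass to higher--order derivatives, or perturb in $r\uparrow c$), and the one--sided asymptotic analysis at $+\infty$ underlying the existence of $b_0$.
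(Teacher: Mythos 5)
The paper itself does not prove this lemma; it cites He and Liang \cite{he and Liang1}, Shreve, Lehoczky and Gaver \cite{s44} and Paulsen and Gjessing \cite{s1015} and omits the argument, so there is no in-paper proof to compare against. Your plan follows exactly the standard route used in those references: carry out the inner maximisation to split the HJB into the unconstrained (reinsurance) and constrained ($a=1$) ODE regimes, establish $h'>0$ by a first-zero/maximum-principle argument, and obtain the sign pattern of $h''$ by differentiating the linear-regime ODE at a zero of $h''$. The architecture, the identity $h'''(b_0)=2(c-r)h'(b_0)/(\sigma^2+\sigma_p^2b_0^2)$ when $r<c$, and the characterisation of $x_0$ as the point where $\widehat a$ hits $1$ are all correct.

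Two steps, however, need genuine repair rather than a flag. First, the start-up estimate: ruling out an interior local minimum where $h<0$ does not by itself give $h\ge0$, since $h$ could leave $0$ at $m$ with $h'(m)<0$ and decrease monotonically; nor does evaluating the HJB at $m$ exclude $h'(m)<0$, because $h''(m)>0$ can make the $a=1$ term $\tfrac12(\sigma^2+\sigma_p^2m^2)h''(m)+(\mu+rm)h'(m)$ vanish. One needs extra input (a growth/normalisation condition at $+\infty$, or identification of $h$ with the value function) to pin down $h'(m)>0$; note also that $h\equiv0$ satisfies the stated hypotheses, so some nontriviality assumption is genuinely required, as you observe. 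Second, the case $r=c$ is not a nuisance to perturb away but an actual failure of part (ii). When $r=c$, the linear-regime ODE $\tfrac12(\sigma^2+\sigma_p^2x^2)h''+(\mu+rx)h'-rh=0$ admits the affine particular solution $x+\mu/r$; setting $g:=(x+\mu/r)h'-h$ one has $g'=(x+\mu/r)h''$ and $h''=-2rg/(\sigma^2+\sigma_p^2x^2)$, so $g$ solves a first-order linear homogeneous ODE. If $h''(b_0)=0$ then $g(b_0)=0$, hence $g\equiv0$ and $h''\equiv0$ on the whole maximal linear-regime interval containing $b_0$; in particular $h''$ cannot be $<0$ on one side of $b_0$ and $>0$ on the other. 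So the asserted unique strict inflection point requires $r<c$, and the paper's blanket hypothesis $r\le c$ is too weak for this lemma. Your caveat about $r=c$ is therefore not merely a gap in the proof but a gap in the statement.
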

\begin{proof}
 The proof  of this lemma is standard and can be proved
 by the same way as in  the proof of He and
  Liang \cite{he and Liang1}, Shreve, Lehoczk
and Gaver \cite{s44} and Paulsen and Gjessing \cite{s1015}. So we
omit it here.
\end{proof}
\vskip 10pt\noindent Assume  that
 $h(x)$ is a  solution of
(\ref{31}). Define  functions $ F_b(x)$ and $a^\ast(x)$  by
\begin{eqnarray}\label{32}
F_b(x)=\left\{
\begin{array}{l l l}0,&0\leq x<m,\\
\frac
{h(x)}{h'(b)},&m\leq x \leq b,\\
x-b+F_b(b),&x\geq b
\end{array}\right.
\end{eqnarray}
and
\begin{eqnarray}\label{33}
a^\ast(x)=\left\{ \begin{array}{l l l} \lambda x , & 0\leq x \leq
x_0,\\
1, & x\geq x_0
\end{array}\right.
\end{eqnarray}
respectively, where  $ \lambda=\frac{\mu}{\sigma^2(1-\alpha^\ast)}$.
It easily follows that $F_b\in C^2 ([m,\infty)\\\{b\})$.
Now we can present the main result of this paper as follows.
We will give rigorous proof of the main result in the appendix.
\begin{Them}\label{theorem31}
Let  level of risk $ \varepsilon \in (0, 1)$ and time horizon $T$ be
given. \vskip 10pt\noindent
 (i) If $\mathbb{P}[\tau_{b_0}^{\pi_{b_o}^\ast}\leq
T]\leq\varepsilon$ then the optimal return function $V(x)$ is
$F_{b_0}(x)$ defined by (\ref{32}), and
$V(x)=F_{b_0}(x)=V(x,0)=J(x,\pi_{b_o}^\ast) $. The optimal  policy
$\pi_{b_o}^\ast $ is
$\{a^\ast(R^{\pi_{b_o}^\ast}_t),L^{\pi_{b_o}^\ast}_t\}$, where
$\{R^{\pi_{b_o}^\ast}_t, L^{\pi_{b_o}^\ast}_t  \} $ is uniquely
determined  by the following stochastic differential
 equation
\begin{eqnarray}\label{34}
\left\{
\begin{array}{l l l}
dR_t^{\pi_{b_o}^\ast}=(a^\ast(R^{\pi_{b_o}^\ast}_t)\mu+rR_t^{\pi_{b_o}^\ast}
)dt+\sqrt{\big (a^\ast(R^{\pi_{b_o}^\ast}_t)\big
)^2\sigma^2+\sigma_p^2\cdot
(R_t^{\pi_{b_o}^\ast}) ^2}\  d W_t\\
\qquad \qquad \qquad \qquad -dL_t^{\pi_{b_o}^\ast},\\
m\leq R^{\pi_{b_o}^\ast}_t\leq  b_0,\\
\int^{\infty}_0 I_{\{t: R^{\pi_{b_o}^\ast}_t
<b_0\}}(t)dL_t^{\pi_{b_o}^\ast}=0.
\end{array}\right.
\end{eqnarray}
 The solvency of the company is bigger than $1-\varepsilon$.
\vskip 10pt\noindent
 (ii) If $\mathbb{P}[\tau_{b_0}^{\pi_{b_o}^\ast}\leq T]>\varepsilon$
  then
there is a unique optimal dividend $b^\ast(\geq b_0)$ satisfying
$\mathbb{P}[\tau _{b^\ast}^{\pi_{b^*}^\ast}\leq T]= \varepsilon $.
The optimal return  function $V(x)$ is $F_{b^*}(x)$ defined by
(\ref{32}), that is,
\begin{eqnarray}\label{eq4.27}
V(x)=F_{b^*}(x)=\sup_{b\in \mathfrak{B}}\{V(x,b)\},
\end{eqnarray}
where
\begin{eqnarray}
\label{eq38}b^\ast=\min\{b: \mathbb{P}[\tau_{b}^{\pi_{b}} \leq T] =
\varepsilon                \}=\min\{b: b\in \mathfrak{B}\}\in
\mathfrak{B}
\end{eqnarray}
and
\begin{eqnarray*}\label{fe3}
 \mathfrak{B}:=\big \{b: \mathbb{P}[\tau_{b}^{\pi_{b}} \leq T] \leq
\varepsilon, \  J(x, \pi_b)= V(x,b) \mbox{ and} \ \pi_b \in \Pi_b\
\big\}.
 \end{eqnarray*}
Moreover,
\begin{eqnarray}
\label{eq37}V(x)=V(x,b^*)=J(x, \pi^*_{b^*})
\end{eqnarray}
and the optimal policy $\pi_{b^*}^\ast$ is $
 \{a^\ast(R^{\pi_{b^*}^\ast}_t),L^{\pi_{b^*}^\ast}_t\}$, where
$\{R^{\pi_{b^*}^\ast}_t, L^{\pi_{b^*}^\ast}_t  \} $ is uniquely
determined  by the following stochastic differential
 equation
\begin{eqnarray}\label{37}
\left\{
\begin{array}{l l l}
dR_t^{\pi_{b^*}^\ast}=(a^\ast(R^{\pi_{b^*}^\ast}_t)\mu+rR_t^{\pi_{b^*}^\ast}
)dt+\sqrt{\big (a^\ast(R^{\pi_{b^*}^\ast}_t)\big
)^2\sigma^2+\sigma_p^2\cdot
(R_t^{\pi_{b^*}^\ast}) ^2}\  d W_t\\
\qquad \qquad \qquad \qquad -dL_t^{\pi_{b^*}^\ast},\\
m\leq R^{\pi_{b^*}^\ast}_t\leq  b^*,\\
\int^{\infty}_0 I_{\{t: R^{\pi_{b^*}^\ast}_t
<b^*\}}(t)dL_t^{\pi_{b^*}^\ast}=0.
\end{array}\right.
\end{eqnarray}
The solvency of the company is $1-\varepsilon$. \vskip 10pt\noindent
(iii) For any $x\leq b_0$,
 \begin{eqnarray}\label{38}
 \frac{ F_{b^*}(x)  }{F_{b_0}(x)   }=\frac{h'(b^\ast)}{h'(b_0)}<1.
 \end{eqnarray}
\end{Them}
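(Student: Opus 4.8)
The plan is to reduce the constrained optimization to a one–parameter family of \emph{unconstrained} barrier problems indexed by the dividend level $b$. For $b\ge m$ write $\mathcal L^a\varphi(x):=\frac12[\sigma^2a^2+\sigma_p^2x^2]\varphi''(x)+[\mu a+rx]\varphi'(x)$ and recall $V(x,b)=\sup_{\pi\in\Pi_b}J(x,\pi)$ from (\ref{f3}). The core step is a verification theorem: \emph{for every $b\ge b_0$ one has $V(x,b)=F_b(x)=J(x,\pi^\ast_b)$, where $\pi^\ast_b=\{a^\ast(R_t),L_t\}$ is the reflection policy determined by the Skorohod problem (\ref{34})/(\ref{37}), and this optimizer is unique}. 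To carry it out I would (a) note that (\ref{34})/(\ref{37}) is a reflected SDE on $[m,b]$ with absorption at $m$ whose drift $a^\ast(x)\mu+rx$ is Lipschitz and whose diffusion coefficient $\sqrt{(a^\ast(x))^2\sigma^2+\sigma_p^2x^2}$ is locally Lipschitz with linear growth (it equals $x\sqrt{\lambda^2\sigma^2+\sigma_p^2}$ on $[m,x_0]$ and $\sqrt{\sigma^2+\sigma_p^2x^2}$ on $[x_0,\infty)$), so it has a unique strong solution up to $\tau^{\pi^\ast_b}_x$; (b) observe that by Lemma \ref{lemma1}, $F_b$ is a positive multiple of $h$ on $[m,b]$, hence solves the HJB equation (\ref{31}) there, while on $[b,\infty)$ it is affine of slope $1$ and, since $h''(b)\ge0$ for $b\ge b_0$ by Lemma \ref{lemma1}(ii) (and $b_0>x_0$), also satisfies $\mathcal L^aF_b-cF_b\le0$ for all $a\in[0,1]$ there — so $F_b\in C^1[m,\infty)$ is a piecewise–$C^2$ supersolution of $\sup_{a\in[0,1]}\{\mathcal L^aF_b-cF_b\}=0$ on all of $[m,\infty)$; (c) apply the generalized It\^o formula to $e^{-ct}F_b(R^\pi_t)$ for arbitrary $\pi\in\Pi_b$, localize to kill the martingale part, and use $F_b\ge0$ together with $F_b(R^\pi_{\tau})=F_b(m)=0$ to get $J(x,\pi)\le F_b(x)$. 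The feature that makes (c) work is that a policy in $\Pi_b$ never pays dividends while $R<b$, so every increment of $L^\pi$ — reflecting local time at $b$, or a lump that still leaves $R\ge b$ — occurs in $\{x\ge b\}$ where $F_b'\equiv1$; hence the singular and jump contributions telescope exactly and no concavity of $F_b$ on $(x_0,b_0)$ is needed. Taking $\pi=\pi^\ast_b$ turns the inequality into an equality, since $a^\ast(x)$ is the maximizer in (\ref{31}) and $L^{\pi^\ast_b}$ charges only $\{R^{\pi^\ast_b}_t=b\}$; this also pins down the optimizer uniquely. In particular $b=b_0$ gives $V(x)=\sup_{b\in\mathfrak B}V(x,b)\le\sup_{b\ge m}V(x,b)=V(x,0)=F_{b_0}(x)$.

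Next I would establish the properties of $\psi(b):=\mathbb P[\tau^{\pi^\ast_b}_b\le T]$. Coupling the driving Brownian motion and comparing the reflected diffusions issued from and reflected at levels $b_1>b_2$, a pathwise comparison gives $R^{\pi^\ast_{b_1}}_t\ge R^{\pi^\ast_{b_2}}_t$ up to $\tau^{\pi^\ast_{b_2}}_{b_2}$, hence $\tau^{\pi^\ast_{b_1}}_{b_1}\ge\tau^{\pi^\ast_{b_2}}_{b_2}$, so $\psi$ is strictly decreasing; continuity of $b\mapsto\psi(b)$ follows from the Lipschitz dependence of the Skorohod map on the reflection level and the absence of an atom of $\tau$ at $T$; and since $a^\ast(x)\mu+rx>0$ for $x\ge m$, a drift/exponential estimate shows a descent from a high level to $m$ within $[0,T]$ is exponentially unlikely, so $\psi(b)\to0$ as $b\to\infty$. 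This gives the dichotomy of the theorem: either $\psi(b_0)\le\varepsilon$ (case (i)), or, by continuity and strict monotonicity, there is a unique $b^\ast>b_0$ with $\psi(b^\ast)=\varepsilon$ (case (ii)).

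It then remains to identify $V$ and $\mathfrak B$ in each case. In case (i), the verification gives $J(x,\pi^\ast_{b_0})=V(x,b_0)=F_{b_0}(x)$ with $\pi^\ast_{b_0}\in\Pi_{b_0}$, and $\psi(b_0)\le\varepsilon$ puts $b_0\in\mathfrak B$; combined with $V(x)\le F_{b_0}(x)$ this yields $V(x)=F_{b_0}(x)=V(x,0)$, attained by $\pi^\ast_{b_0}$, whose solvency is $1-\psi(b_0)\ge1-\varepsilon$. In case (ii) I would first rule out $b<b_0$ from $\mathfrak B$: there the reflection–at–$b_0$ policy is admissible in $\Pi_b$ and optimal (as $V(x,b)\le V(x,0)=F_{b_0}(x)$), so any optimizer $\pi_b$ must reflect at $b_0$ and thus has bankruptcy probability at least $\psi(b_0)>\varepsilon$; while for $b\ge b_0$ the verification gives $J(x,\pi^\ast_b)=V(x,b)$ with $\pi^\ast_b\in\Pi_b$, so $b\in\mathfrak B\iff\psi(b)\le\varepsilon\iff b\ge b^\ast$. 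Hence $\mathfrak B=[b^\ast,\infty)$ and $b^\ast=\min\mathfrak B\in\mathfrak B$, which is (\ref{eq38}). Since for fixed $x$ the map $b\mapsto F_b(x)$ is non-increasing on $[b_0,\infty)$ (for $x\le b$ it is $h(x)/h'(b)$ with $h'$ increasing by Lemma \ref{lemma1}(ii); for $x>b$ a short computation gives $\partial_bF_b(x)\le0$), we get $V(x)=\sup_{b\ge b^\ast}V(x,b)=\sup_{b\ge b^\ast}F_b(x)=F_{b^\ast}(x)=V(x,b^\ast)=J(x,\pi^\ast_{b^\ast})$, i.e. (\ref{eq4.27})--(\ref{eq37}), with solvency $1-\psi(b^\ast)=1-\varepsilon$. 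Finally, (\ref{38}) is immediate from (\ref{32}): for $x\le b_0\le b^\ast$ both $F_{b_0}$ and $F_{b^\ast}$ are the indicated multiples of $h$, so their ratio is $h'(b_0)/h'(b^\ast)$, which is $<1$ because $h'$ is strictly increasing on $[b_0,\infty)$ and $b^\ast>b_0$.

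The step I expect to be the main obstacle is the verification in the first paragraph: one must handle the generalized It\^o formula across the kink of $F_b$ at $b$, secure enough integrability to discard the martingale and terminal terms (using that $R^{\pi^\ast_b}$ is confined to $[m,b]$ and that $R^\pi$ has at most linear growth in general), and, above all, organize the singular/jump bookkeeping of $L^\pi$ so as to exploit that $\Pi_b$ forbids dividend payments in $[m,b)$ — this is exactly what rescues $F_b$ from failing to be a supersolution on $(x_0,b_0)$, where $F_b'$ may fall below $1$ once $b>b_0$. The regularity and limiting behaviour of $b\mapsto\psi(b)$ is the secondary technical point.
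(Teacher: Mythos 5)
Your proposal reconstructs, in expanded form, exactly the logical skeleton that the paper uses: reduce to a one-parameter family of barrier problems, establish $V(x,b)=F_b(x)=J(x,\pi^\ast_b)$ via a verification argument (the paper's Lemmas \ref{lemma2} and \ref{lemma3}), establish that $\psi(b)=\mathbb{P}[\tau^{\pi^\ast_b}_b\le T]$ is decreasing (Lemma \ref{lemma4.1}), tends to $0$ (Lemma \ref{lemma4.2}), and is continuous (Lemma \ref{lemma4.4}), then pick out $b^\ast$ by the intermediate-value argument, and finally read off (\ref{38}) from $F_b(x)=h(x)/h'(b)$. Your observation that the restriction to $\Pi_b$ is what makes the singular term in It\^o's formula telescope — because $F_b'\equiv 1$ on $\{x\ge b\}$ even though $F_b'<1$ on part of $(m,b)$ when $b>b_0$ — is precisely the mechanism behind Lemmas \ref{lemma2}--\ref{lemma3}, so you have correctly identified the heart of the verification step.

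Where you diverge from the paper is the continuity of $b\mapsto\psi(b)$. You propose to invoke Lipschitz dependence of the Skorohod map on the reflection level together with absence of an atom of the hitting time at $T$. The paper instead rescales $[m,b]$ to $[0,1]$, writes the survival probability as the solution $\phi^b$ of the PDE (\ref{4.16}), and proves $L^2$ stability of $\phi^b$ in $b$ via a Gronwall energy estimate (Lemma \ref{lemma4.4}). Your route is plausible but materially less secure here: the Skorohod problem in question has a state-dependent, non-constant diffusion coefficient and mixes absorption at $m$ with reflection at $b$, so continuity in the reflection level is not an off-the-shelf fact, and ``no atom at $T$'' is itself something that must be argued. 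The paper's PDE approach is the safer and the one actually carried out; if you were to complete the proof you would need either to supply the missing estimates for the reflected SDE or to switch to the PDE route at this point. Your sketch of $\psi(b)\to 0$ (``descent from a high level is exponentially unlikely'') is also more compressed than the paper's two-stage argument (first dominating $\psi(b)$ by the ruin probability started from $\sqrt[n]{b}$, then bounding that via the exponential martingale $\mathcal{E}(M_1)$), but points in the same direction. Aside from these technical elaborations, the structure and the conclusions match the paper's.
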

\vskip 1cm\noindent { \bf  Economic and financial explanation of
theorem \ref{theorem31} is as follows:}
 \vskip 10pt\noindent
{\sl  (1) For a given level of risk  and time horizon, if
probability of
 bankruptcy is less  than the level of risk, the optimal
  control problem of
(\ref{f4}) and (\ref{f5}) is the traditional one, the company has
higher solvency, so it will have  good reputation. The solvency
constraints here do not work. This is a trivial case. In view of
Theorem \ref{theorem42} below, the model treated in \cite{HHL} can
be reduced to this trivial case. \vskip 10pt\noindent
 (2) If probability of bankruptcy is
large than the level of risk, the traditional optimal policy will
not meet the standard of security and solvency, the company needs to
find a sub-optimal policy $\pi_{b^*}^\ast $ to improve its solvency.
The sub-optimal reserve process $ R^{\pi_{b^*}^\ast}_t $ is a
diffusion process reflected at $b^*$, the process
$L^{\pi_{b^*}^\ast}_t  $ is the process which ensures the
reflection. The sub-optimal action is to pay out everything in
excess of $b^*$ as dividend and pay no dividend when the reserve is
below $b^*$, and $ a^*(x)$ is the sub-optimal feedback control
function.\vskip 10pt\noindent (3) On the one hand, the inequality
(\ref{38}) states that $\pi_{b^*}^\ast $ will reduce the company's
profit, on the other hand, in view of (\ref{eq38}) and
$\mathbb{P}[\tau _{b^\ast}^{\pi_{b^*}^\ast}\leq T]= \varepsilon $ as
well as lemma \ref{lemma4} below, the cost of improving solvency is
minimal. Therefore the policy $\pi_{b^*}^\ast $ is the best
equilibrium action between making profit and improving solvency. }
\vskip 1cm
\noindent {\bf Effect of  the risk level $\varepsilon$ and
minimum reserve requirement $m$ on  the optimal reaction and
dividend payout level of the insurance company is given  as follows:}
\vskip 10pt\noindent
{\sl (4) We see from the figure \ref{test2m=1,2} below( based on PDE(\ref{4.16})satisfied by solvency probability) that the dividend payout level $b^*$ is an increasing function of minimum reserve requirement $m$. Using comparison theorem for one-dimensional It\^{o} process we know that the reserve process $R_t^{\pi_{b^*}^\ast}$ of the insurance company is  also an  increasing function of $b^*$. Therefore, since the sub-optimal feedback control function $ a^*(x)$ is increasing with respect to $x$, by theorem\ref{theorem31} we conclude that the optimal retention ratio $a^\ast(R^{\pi_{b_o}^\ast}_t)$  increases with $m$, that is, increasing minimum reserve requirement  will improve the optimal retention ratio. However, this increasing action must result in lower  profit because the optimal return function $ V(x,b^*)$ is a decreasing of $b^*$(see Lemma \ref{lemma4}). So the process $L^{\pi_{b^*}^\ast}_t  $ is a decreasing function of $m$ too.
\vskip 10pt\noindent
(5) We see from the figure \ref{test22m=1} below that the dividend payout level $b^*$ is a decreasing function of the risk $\varepsilon$. So, by the same argument as in (4) above, the optimal retention ratio $a^\ast(R^{\pi_{b_o}^\ast}_t)$  decreases with $\varepsilon$, the process $L^{\pi_{b^*}^\ast}_t  $ increases with $\varepsilon$.
 }
 \vskip 10pt  \noindent
(6) We also see from the figure \ref{test3sigma=1,41} below that, for given the risk $\varepsilon$, the dividend payout level $b$ is an increasing function of underwriting risk $\sigma^2$, so it decreases the company's profit.
\begin{Remark}
Because the \cite{HHL} had no continuity of probability of
bankruptcy and actual $b^*  $, the authors of \cite{HHL}  did not
obtain the best equilibrium policy $\pi_{b^*}^\ast $.
\end{Remark}
\begin{Remark}
By (\ref{4.16}) one knows that the equation $\psi(T,m, b^*)=1-\phi(T,m,b^*)=\varepsilon$
can set a risk-based capital standard $( m,b^*)$ to ensure
the capital requirement of can cover the total given risk $\varepsilon$, then establish the optimal return function, the
optimal retention rate and dividend payout level via Theorem \ref{theorem31}.
\end{Remark}
\begin{Remark}
By using the same approach as in \cite{s104} we can show that the
$b^* $ is an increasing function  of  $\sigma_p^2$, so  the company
has possibility of making larger gain from the reinvestments. We
omit the analysis here. We focus on  the effect of investments risk
on probability of bankruptcy for the  topic of this
paper in next section.
\end{Remark}
\vskip 10pt\noindent
 \setcounter{equation}{0}
\section{\bf Analysis on risk of a large insurance company }
 \vskip 10pt\noindent
 The first result of this
section is the following, which states that the company has to find
optimal policy to improve its solvency.
\begin{Them}\label{theorem41}For $b\geq m>0$, let
 $\{R^{\pi_{b}^\ast}_t, L^{\pi_{b}^\ast}_t  \} $ be defined  by the
following SDE( see Lions and Sznitman \cite{s10100})
\begin{eqnarray}\label{c21}
\left\{
\begin{array}{l l l}
dR_t^{\pi_{b}^\ast}=(a^\ast(R^{\pi_{b}^\ast}_t)\mu+rR_t^{\pi_{b}^\ast}
)dt+\sqrt{\big (a^\ast(R^{\pi_{b}^\ast}_t)\big
)^2\sigma^2+\sigma_p^2\cdot
(R_t^{\pi_{b}^\ast}) ^2}\  d W_t\\
\qquad \qquad \qquad \qquad -dL_t^{\pi_{b}^\ast},\\
m\leq R^{\pi_{b}^\ast}_t\leq  b,\\
\int^{\infty}_0 I_{\{t: R^{\pi_{b}^\ast}_t
<b\}}(t)dL_t^{\pi_{b}^\ast}=0,\\
R_0^{\pi_{b}^\ast}=b.
\end{array}\right.
\end{eqnarray}
Then
\begin{eqnarray}\label{R4.1}
\mathbb{P}\{\tau_{b}^{\pi_{b}^\ast}\leq T \} \geq \varepsilon (b, T)
\equiv \frac{4[1-\Phi(\frac{b-m}{\sqrt{\kappa T}})]^2}{\exp\{
\frac{(\lambda \mu +r)^2 T}{\sigma_p^2}\}}>0 ,
\end{eqnarray}
 where $\tau_{b}^{\pi_{b}^\ast}
=\inf\{t:R^{\pi_{b}^\ast}_t\leq m\}$, $k=(\lambda ^2
\sigma^2+\sigma_p^2)m^2 $, $ \lambda=\frac{\mu}{\sigma^2(1-\alpha^\ast)}$.
\end{Them}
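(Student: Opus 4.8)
The plan is to exploit the fact that the only force opposing bankruptcy is the positive drift $a^\ast(R_t)\mu+rR_t$ of the reserve, to remove that drift by a Girsanov change of measure, to estimate the chance of reaching $m$ under the new (driftless) measure by the reflection principle, and finally to push the estimate back to $\mathbb{P}$ by Cauchy--Schwarz --- it is this last step that produces the square and the factor $4$ in $\varepsilon(b,T)$, and the resulting Radon--Nikodym penalty that produces the exponential in its denominator. The reflecting barrier at $b$ actually works in our favour, since it only pushes $R^{\pi_b^\ast}$ downwards; it will enter only through the trivial bound ``reserve $\le b+$ martingale part''. Note there is nothing to prove unless $\sigma_p>0$ (otherwise $\varepsilon(b,T)=0$), so assume $\sigma_p>0$.

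First I would rewrite, on the range $[m,b]$ of $R:=R^{\pi_b^\ast}$, the dynamics as $dR_t=\gamma(R_t)\big(\theta(R_t)\,dt+dW_t\big)-dL_t^{\pi_b^\ast}$ with $\gamma(x)=\sqrt{(a^\ast(x))^2\sigma^2+\sigma_p^2x^2}$ and $\theta(x)=(a^\ast(x)\mu+rx)/\gamma(x)\ge0$. Using $a^\ast(x)\le\lambda x$ for $x>0$ (equality on $[0,x_0]$, and $1\le\lambda x$ on $[x_0,\infty)$) one has $a^\ast(x)\mu+rx\le(\lambda\mu+r)x$, and since $\gamma(x)\ge\sigma_p x$ this yields the uniform bound $0\le\theta(x)\le(\lambda\mu+r)/\sigma_p=:\theta_0$ on $[m,b]$. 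Because $\theta$ is bounded, $Z_t:=\mathcal{E}\big(-\!\int_0^{\cdot}\theta(R_s)\,dW_s\big)_t$ is a genuine martingale, so $d\mathbb{Q}/d\mathbb{P}|_{\mathcal{F}_T}:=Z_T$ defines a probability measure under which $\widetilde W_t:=W_t+\int_0^t\theta(R_s)\,ds$ is a Brownian motion and $dR_t=\gamma(R_t)\,d\widetilde W_t-dL_t^{\pi_b^\ast}$.

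Next, under $\mathbb{Q}$ I would set $\widetilde M_t:=\int_0^t\gamma(R_s)\,d\widetilde W_s$, so that $R_t=b+\widetilde M_t-L_t^{\pi_b^\ast}\le b+\widetilde M_t$ since $L^{\pi_b^\ast}$ is nondecreasing. Also $\gamma(x)^2\ge(\lambda^2\sigma^2+\sigma_p^2)m^2=\kappa$ on $[m,b]$ (using $a^\ast(x)=\lambda x$ on $[m,x_0]$), hence on $\{\tau_b^{\pi_b^\ast}>T\}$, where $R_s\in[m,b]$ for $s\le T$, one has $\langle\widetilde M\rangle_T\ge\kappa T$. Representing $\widetilde M_t=\widehat B_{\langle\widetilde M\rangle_t}$ via Dambis--Dubins--Schwarz (enlarging the space if needed), on $\{\tau_b^{\pi_b^\ast}>T\}$ we get $m<\min_{t\le T}R_t\le b+\min_{u\le\langle\widetilde M\rangle_T}\widehat B_u\le b+\min_{u\le\kappa T}\widehat B_u$, so $\{\tau_b^{\pi_b^\ast}>T\}\subseteq\{\min_{u\le\kappa T}\widehat B_u>-(b-m)\}$, and the reflection principle gives $\mathbb{Q}[\tau_b^{\pi_b^\ast}\le T]\ge2\big(1-\Phi(\tfrac{b-m}{\sqrt{\kappa T}})\big)$. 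Finally, for $E:=\{\tau_b^{\pi_b^\ast}\le T\}$, Cauchy--Schwarz under $\mathbb{Q}$ (applied to $\mathbf{1}_E=\mathbf{1}_EZ_T^{-1/2}\cdot Z_T^{1/2}$) yields $\mathbb{Q}[E]\le\mathbb{P}[E]^{1/2}\,\mathbb{E}^{\mathbb{Q}}[Z_T]^{1/2}$, i.e. $\mathbb{P}[E]\ge\mathbb{Q}[E]^2/\mathbb{E}^{\mathbb{Q}}[Z_T]$; and expressing $W$ through $\widetilde W$ shows $Z_T=\mathcal{E}\big(-\!\int_0^{\cdot}\theta(R_s)\,d\widetilde W_s\big)_T\exp\big(\int_0^T\theta(R_s)^2\,ds\big)$, whence $\mathbb{E}^{\mathbb{Q}}[Z_T]\le e^{\theta_0^2T}=\exp\big((\lambda\mu+r)^2T/\sigma_p^2\big)$. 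Combining the last displays gives $\mathbb{P}[\tau_b^{\pi_b^\ast}\le T]\ge\varepsilon(b,T)$, with $\varepsilon(b,T)>0$ because $\Phi<1$ and $\sigma_p>0$.

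The main obstacle lies in the change-of-measure step: the Girsanov drift must be taken in the feedback form $\theta(R_s)$ and shown to be \emph{uniformly} bounded --- this is exactly where $\gamma(x)\ge\sigma_p x$, hence $\sigma_p>0$, is used --- while at the same time one needs the opposite inequality $\gamma(x)^2\ge\kappa$ for the time-change estimate; extracting both simultaneously from the piecewise form of $a^\ast$ is the delicate point. The Cauchy--Schwarz transfer, which converts a $\mathbb{Q}$-probability into a $\mathbb{P}$-probability at the cost of the exponential factor, is the one non-routine idea, and is precisely what dictates the shape $4[\,\cdot\,]^2/\exp\{\cdot\}$ of $\varepsilon(b,T)$.
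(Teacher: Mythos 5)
Your proof is correct and reproduces the paper's lower bound, and the overall chain of ideas — Girsanov to strip off the drift, a time change to reduce to Brownian motion, the reflection principle under the new measure, and Cauchy--Schwarz to push the estimate back to $\mathbb{P}$ (with the $[\,\cdot\,]^2$ and the exponential Radon--Nikodym penalty emerging exactly from that last step) — is the same as the paper's. The one genuine difference is how the reflecting barrier at $b$ is handled. The paper first dominates $R^{\pi_b^\ast}$ from above by an auxiliary \emph{unreflected} diffusion $R^{(1)}$ with the same drift and volatility (dropping the $-dL$ term), appeals to a comparison theorem to get $R^{(1)}_t\ge R^{\pi_b^\ast}_t\ge m$ a.s., and then applies Girsanov, an explicit time change $\rho$, and Cauchy--Schwarz entirely to $R^{(1)}$, transferring back via $\{\inf_t R^{(1)}_t\le m\}\subseteq\{\tau_b^{\pi_b^\ast}\le T\}$. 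You instead work directly with the reflected process, using only that $L$ is non-negative and non-decreasing so that $R_t\le b+\widetilde M_t$, which is all the Dambis--Dubins--Schwarz/reflection-principle step needs; this avoids the auxiliary process and the invocation of a comparison theorem for reflected It\^o diffusions. The ingredients are otherwise identical: the same uniform bound $(\lambda\mu+r)/\sigma_p$ on the Girsanov drift via $a^\ast(x)\le\lambda x$ and $\gamma(x)\ge\sigma_p x$, the same lower bound $\kappa$ on $\gamma(x)^2$ over $[m,b]$, and the same Cauchy--Schwarz identity $\mathbb{P}[E]\ge\mathbb{Q}[E]^2/\mathbb{E}^{\mathbb{Q}}[Z_T]$, noting $\mathbb{E}^{\mathbb{Q}}[Z_T]=\mathbb{E}^{\mathbb{P}}[Z_T^2]$; you additionally supply the exponential-martingale computation that the paper only asserts for $\mathbb{E}^{\mathbb{P}}[M_T^2]$. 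One small shared point: the inequality $\gamma(x)^2\ge\kappa=(\lambda^2\sigma^2+\sigma_p^2)m^2$ on $[m,b]$ (equivalently the paper's $\dot\rho\le 1/\kappa$) tacitly uses $m\le x_0$, which both you and the paper assume implicitly; this is not a gap you introduced.
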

\begin{proof} Since $a^*(x)$ is a bounded  Lipschitz continuous function,
the following SDE
\begin{eqnarray*}
dR_t^{(1)}=(a^\ast(R_t^{(1)})\mu+rR_t^{(1)})dt+
\sqrt{{a^\ast}^2(R_t^{(1)})\sigma^2+\sigma_p^2
{R_t^{(1)}}^2}dW_t,R_0^{(1)}=b
\end{eqnarray*}
has a unique solution  $R_t^{(1)}$. Using comparison theorem for
one-dimensional It\^{o} process, we have
\begin{eqnarray}\label{R4.2}
\mathbb{P}\{R_t^{(1)}\geq R^{\pi_{b}^\ast}_t \}=1.
\end{eqnarray}
Let $\mathbb{Q} $ be a measure  on $\mathcal{F}_T $ defined   by
\begin{eqnarray}\label{R4.3}
d\mathbb{Q}(\omega)=M_T(\omega)d\mathbb{P}(\omega),
\end{eqnarray}
where
 \begin{eqnarray*} M_t & = &\exp \Large\{ -\int_0^t
\frac{a^\ast(R_s^{(1)})\mu+rR^{(1)}_s}{\sqrt{{a^\ast}^2(R_s^{(1)})\sigma^2+\sigma_p^2
{R_s^{(1)}}^2}}dW_s\\
& &\qquad  -\frac{1}{2}\int_0^t
\frac{(a^\ast(R_s^{(1)})\mu+rR_s^{(1)})^2}{{a^\ast}^2(R_s^{(1)})\sigma^2+\sigma_p^2
{R_s^{(1)}}^2}ds \Large\}.
\end{eqnarray*}
Since $\{ M_t\}$ is a martingale w.r.t.$\mathcal {F}_t$, we have
$\mathbb{E} \big [ M_T \big ] =1$. Using Girsanov theorem, we know
that $\mathbb{Q}$ is a probability measure on $\mathcal {F}_T$ and
the process $\{R^{(1)}_t\}$ satisfies the following SDE
\begin{eqnarray*}
dR_t^{(1)}=\sqrt{{a^\ast}^2(R_t^{(1)})\sigma^2+\sigma_p^2
{R_t^{(1)}}^2}d\tilde{W}_t,R_0^{(1)}=b,
\end{eqnarray*}
where $\tilde{W}_t$ is a Brownian motion  on $(\Omega, \mathcal {F},
\{ \mathcal {F}_{t}\}_{t\geq 0}, \mathbb{Q})$. \vskip 5pt\noindent
In view of (\ref{R4.2}),  $R_t^{(1)}\geq R^{\pi_{b}^\ast}_t\geq m>0$
for any $t\geq 0$, so we can define  $\rho(t)$ by
\begin{eqnarray*}
\dot{\rho}(t)=\frac{1}{{a^\ast}^2(R_t^{(1)})\sigma^2+\sigma_p^2
{R_t^{(1)}}^2}
\end{eqnarray*}
and define $\hat{R}_t^{(1)}$ by $R_{\rho(t)}^{(1)}$. Then $\rho(t)$
is a strictly increasing function  and
\begin{eqnarray*}
\hat{R}_t^{(1)}=b+\hat{W}_t,
\end{eqnarray*}
where $\hat{W}_t$ is a standard Brownian motion on $(\Omega,
\mathcal {F}, \{ \mathcal {F}_{t}\}_{t\geq 0}, \mathbb{Q})$.
Moreover, for $t\geq 0$
\begin{eqnarray*}\label{R4.5}
\dot{\rho}(t)&=&\frac{1}{{a^\ast}^2(R_t^{(1)})\sigma^2+\sigma_p^2
{R_t^{(1)}}^2}\\&\leq& \frac{1}{(\lambda ^2
\sigma^2+\sigma_p^2)m^2}\\&:=&\frac{1}{\kappa} >0,
\end{eqnarray*}
 so  $\rho(t)\leq \frac{1}{\kappa} t$ and $\rho^{-1}(t) \geq \kappa
t$. As a result
\begin{eqnarray}\label{244}
\mathbb{Q}[\inf\{t:R_t^{(1)}\leq m\}\leq T]&=&\mathbb{Q}[\inf\{t:
\hat{R}_{\rho^{-1}(t)}^{(1)}\leq m\}\leq
T]\nonumber\\&=&\mathbb{Q}[\inf\{\rho(t): b+\hat{W}_t\leq m\}\leq T]\nonumber\\
&=&\mathbb{Q}[\inf \{t: \hat{W}_t\leq m - b \}\leq
\rho^{-1}(T)]\nonumber\\&\geq& \mathbb{Q}[\inf\{t: \hat{W}_t\leq
m-b\}\leq \kappa T]\nonumber\\&=&2[1-\Phi(\frac{b-m}{\sqrt{\kappa
T}})]>0,
\end{eqnarray}
where $\Phi(\cdot)$ is the standard normal distribution function. By
virtue of (\ref{R4.3}),
\begin{eqnarray}\label{245}
\mathbb{Q}[\inf\{t:R_t^{(1)}\leq m\}\leq T]&=&\int_\omega
\mathcal{1}_{[\inf\{t:R_t^{(1)}\leq m\}\leq
T]}d\mathbb{Q}(\omega)\nonumber\\&=&\int_\omega
\mathcal{1}_{[\inf\{t:R_t^{(1)}\leq m\}\leq
T]}M_Td\mathbb{P}(\omega)\nonumber\\&=&\mathbb{E}^{\mathbb{P}}
[M_T\mathcal{1}_{[\inf\{t:R_t^{(1)}\leq m\}\leq
T]}]\nonumber\\
&\leq&\mathbb{E}^{\mathbb{P}}
[M_T^2]^{\frac{1}{2}}\mathbb{P}[\inf\{t:R_t^{(1)}\leq m\}\leq
T]^{\frac{1}{2}}.\nonumber\\
\end{eqnarray}
Substituting  (\ref{244})  and
\begin{eqnarray*}
\mathbb{E}^{\mathbb{P}}[M_T^2]\leq \exp\{ \frac{(\lambda \mu +r)^2
T}{\sigma_p^2}\},
\end{eqnarray*}
into  (\ref{245}), we get
\begin{eqnarray*}
\mathbb{P}[\inf\{t:R_t^{(1)}\leq m\}\leq
T]&\geq&\frac{\mathbb{Q}[\inf\{t:R_t^{(1)}\leq m\}\leq
T]^2}{\mathbb{E}^{\mathbb{P}}[M_T^2]}\\&\geq&\frac{4[1-\Phi(\frac{b-m}{\sqrt{\kappa
T}})]^2}{\exp\{ \frac{(\lambda \mu +r)^2 T}{\sigma_p^2}\}}>0.
\end{eqnarray*}
Thus by (\ref{R4.2})
\begin{eqnarray}
\mathbb{P}[\tau_{b}^{\pi^*_b}\leq T]& \geq
&\mathbb{P}[\inf\{t:R_t^{(1)}\leq m\}\leq T] \\
&\geq & \varepsilon (b,\sigma^2,\sigma^2_p, T)\equiv\frac{4[1-\Phi(\frac{b-m}{\sqrt{\kappa T}})]^2}{\exp\{
\frac{(\lambda \mu +r)^2 T}{\sigma_p^2}\}}>0.\nonumber
\end{eqnarray}
\end{proof}
 \vskip 10pt \noindent
 {\bf The economic interpretation of  theorem
 \ref{theorem41} is
the following.}\vskip 10pt \noindent
 {\sl   (1)\ The lower  boundary $\varepsilon
(b,\sigma^2, \sigma^2_p, T)$ of bankrupt probability for the company
is an increasing function
of $ \sigma^2_p$, thus the  reinvestments will make the company have
larger risk. \vskip 10pt \noindent(2) \ The lower boundary $\varepsilon (b,\sigma^2,\sigma^2_p, T)$
of bankrupt probability for the company is an increasing function of $ m$, so
the minimum reserve requirement $m$ will increase the risk of  the company
goes to bankruptcy.
\vskip 10pt \noindent
(3)\ The lower boundary $\varepsilon (b,\sigma^2,\sigma^2_p, T)$
of bankrupt probability for the company is a decreasing function of $ b$, so the
optimal dividend payout barrier should keep reasonable high so that
the company gets good solvency.
 \vskip 10pt \noindent (4)\ The company does have larger
risk before the contract between insurer and policy holders  goes
into effect (i.e., $0<T $ is less than the time of the contract issue
) because the lower  boundary $\varepsilon (b,\sigma^2,\sigma^2_p, T)$ is
positive for any $T>0$,
 the company has to find an optimal  policy to improve the
ability of the insurer to fulfill its obligation to policy holders.}
\vskip 10pt\noindent
 Now we  prove the second result of this section.
\begin{Them}\label{theorem42}
Let $m=0$ in Theorem \ref{theorem41}. Then for any T and b
$\mathbb{P}[\tau_{b}^{\pi^*_b}\leq T]=0.$
\end{Them}
\begin{proof} Let $\tau _b ^{\pi^*_b}=inf\{t:R^{\pi^*_b}_t=0,R^{\pi^*_b}_0=b\}$, $
\tau_n=inf\{t:R^{\pi^*_b}_t=2^{-2n}x_0\}$, $A=\{\tau^{\pi^*_b}_b
\leq T\}$ and $B_n=\{\tau_n\leq T\}$. Then for any $n>0$  $A\subset
B_n$. As a result,
\begin{eqnarray*}
\mathbb{P}[A] =\mathbb{P}[A\bigcap B_n]\leq \mathbb{P}[A|B_n].
\end{eqnarray*}
Noting that $ \{ R^{\pi^*_b}_t\}$ is a Markov process, we have
\begin{eqnarray*}
\mathbb{P}[A|B_n]&=&\mathbb{P}[\inf_{0\leq t \leq
T}R^{\pi^*_b}_t\leq 0|\tau_n\leq
T]\\&\leq&\mathbb{P}^{2^{-2n}x_0}[\inf_{0\leq t \leq
T}R^{\pi^*_b}_t\leq 0]\\&\leq& \mathbb{P}^{2^{-2n}x_0}[\inf_{0\leq t
\leq T}R^{\pi^*_b}_t\leq 2^{-3n}x_0\ \mbox{or} \sup _{0\leq t \leq
T}R^{\pi^*_b}_t\geq
2^{-n}x_0]\\&=&1-\mathbb{P}^{2^{-2n}x_0}[\inf_{0\leq t \leq
T}R^{\pi^*_b}_t\geq 2^{-3n}x_0  \ \mbox{and} \sup _{0\leq t \leq
T}R^{\pi^*_b}_t\leq
2^{-n}x_0]\\
&\equiv &1-\mathbb{P}( D).
\end{eqnarray*}
Using  definition of  $a^\ast(x)$,  on the set $D$
\begin{eqnarray*} R^{\pi^*_b}_t&=&2^{-2n}x_0 \exp[[\lambda
\mu+r-\frac{1}{2}(\lambda ^2 \sigma ^2+\sigma_p^2)](t)+\sqrt{\lambda
^2 \sigma ^2+\sigma _p ^2} {W}_{t}]\\&:=&2^{-2n}x_0 \exp[X_t],
\end{eqnarray*}
where $X_t$ is a Brownian motion with drift. So
\begin{eqnarray*}
f(n)&:=&\mathbb{P}^{2^{-2n}x_0}[\inf_{0\leq t \leq
T}R^{\pi^*_b}_t\geq 2^{-3n}x_0 \ \mbox{and}\sup _{0\leq t \leq
T}R^{\pi^*_b}_t\leq 2^{-n}x_0]\\&=&\mathbb{P}[\inf_{0\leq t \leq
T}X_t \geq -n \ln 2 \ \mbox {and}\sup _{0\leq t \leq T}X_t \leq  n
\ln 2]\rightarrow 1
\end{eqnarray*}
 as $n\rightarrow \infty$.
Thus $\mathbb{P}[\tau_{b}^{\pi^*_b}\leq T]=0 $ follows from
$\mathbb{P}[\tau_{b}^{\pi^*_b}\leq T]\leq
  1-f(n)$.
\end{proof}
\vskip 10pt\noindent
{\sl  The interpretation of  Theorem
\ref{theorem42} is that when $m=0$ the company of the
 model will never go to bankruptcy. Indeed, this is an ideal model and
 does not exist in reality. Thus the assumption $m>0$ in this paper is
 reasonable and more closer to real world.}
 \setcounter{equation}{0}
\section{\bf Numerical examples}
\vskip 5pt \noindent In this section we consider
some numerical samples to demonstrate the bankrupt
 probability is a decreasing function of dividend payout
  level $b$ or initial reserve $x$ based on PDE (\ref{4.16}) below.
   The dividend payout level $b (\varepsilon, m , T)$
   decreases with $\varepsilon$, and increases with $m$,
    $\sigma^2$ and $T$ via the
    equation\\ $\psi(T,b, m, x)=\varepsilon$(see (\ref{4.16})).
\begin{example}
Let $\sigma^2=\mu=1, \sigma^2_p=2, T=1, m=1$
in PDE (\ref{4.16}) below, the figures \ref{test1} and \ref{test21m=1} of the bankrupt probability $1-\phi(T,x)$ state that solvency will improve with dividend payout level $b$ or initial reserve $x$, but the company's profit will
reduce(see Lemma \ref{lemma4} below).
\end{example}
\begin{figure}[H]
 \includegraphics[width=0.75\textwidth]{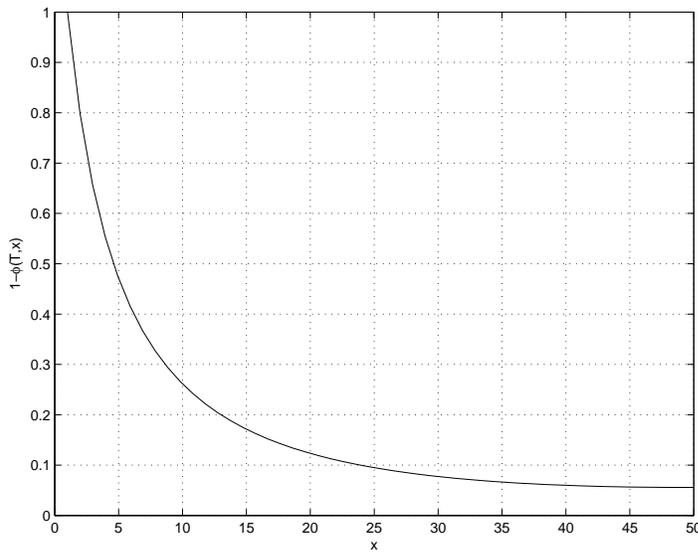}
\caption{ Bankrupt probability $1-\phi(T,x)$ as a function of $x$ (Parameters: $\sigma^2=\mu=1, \sigma^2_p=2, T=1, m=1, b=50$)}\label{test1}
\end{figure}
 \begin{figure}[H]
 \includegraphics[width=0.75\textwidth]{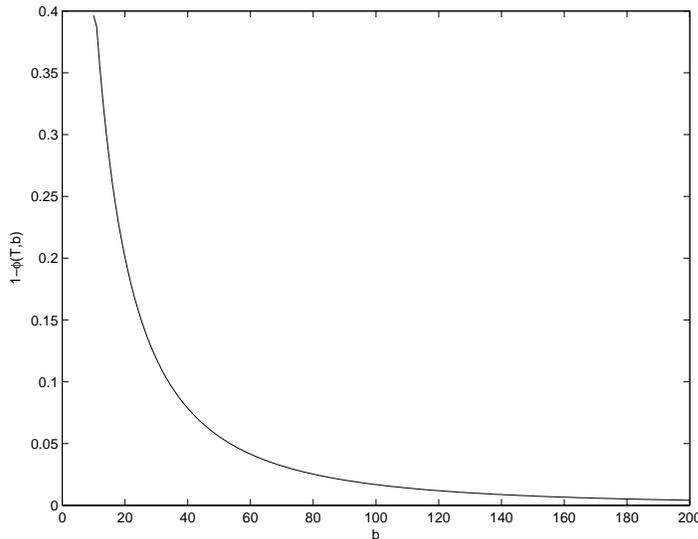}
\caption{Bankrupt probability $1-\phi(T,b)$ as a function of $b$( Parameters:  $\sigma^2=\mu=1, \sigma^2_p=2, T=1, m=1$)}\label{test21m=1}
\end{figure}
\begin{example}
Let $\sigma^2=\mu=1, \sigma^2_p=2, T=1, m=1$ and solve $b( \varepsilon )$ by $1-\phi(T,b)=\varepsilon$, we get the figure \ref{test22m=1}. It  shows that the risk $ \varepsilon$ greatly impacts on dividend payout level $b$. The dividend payout level $b$ decreases with the risk $ \varepsilon$, so the risk $ \varepsilon$   increases the company's profit.
\end{example}
\begin{figure}[H]
 \includegraphics[width=0.75\textwidth]{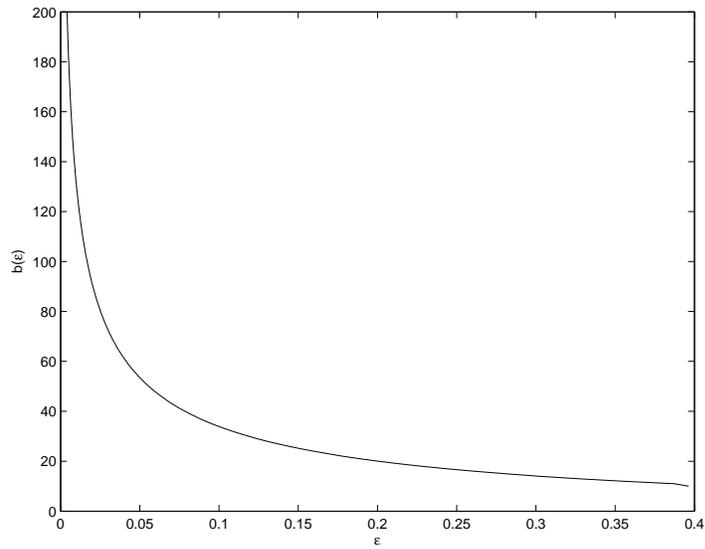}
\caption{Dividend payout level $b( \varepsilon )$  as a function of $\varepsilon$ (Parameters: $\sigma^2=\mu=1, \sigma^2_p=2, T=1, m=1$  )}\label{test22m=1}
\end{figure}
\begin{example}
Let $\sigma^2=\mu=1, \sigma^2_p=2, T=1$ and solve $b( \varepsilon )$ by $1-\phi(T,b)=\varepsilon$, we get the figure \ref{test2m=1,2} below. The two curves in this figure show that the minimum reserve requirement $m$ increases dividend payout level $b$, but decreases the company's profit.
\end{example}
\begin{figure}[H]
 \includegraphics[width=0.75\textwidth]{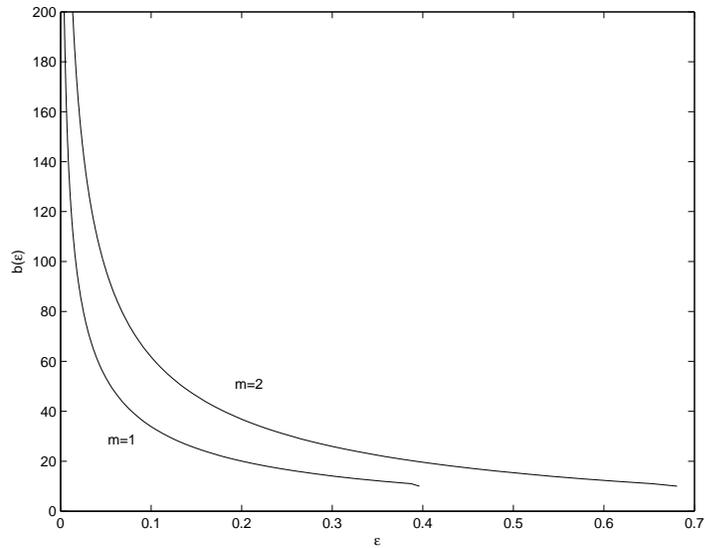}
\caption{Dividend payout level $b( \varepsilon )$
 as a function of
 $\varepsilon$ (Parameters: $ \sigma^2=\mu=1, \sigma^2_p=2, T=1$).}
 \label{test2m=1,2}
\end{figure}
\begin{example}
Let $\sigma^2=\mu=1, \sigma^2_p=2, m=1$ and solve $b( \varepsilon )$ by $1-\phi(T,b)=\varepsilon$, we get the figure \ref{test2T=1,2} below. It portrays that the dividend payout level $b$ is an increasing function of time horizon $T$, so it decreases the company's profit.
\end{example}
 \begin{figure}[H]
 \includegraphics[width=0.75\textwidth]{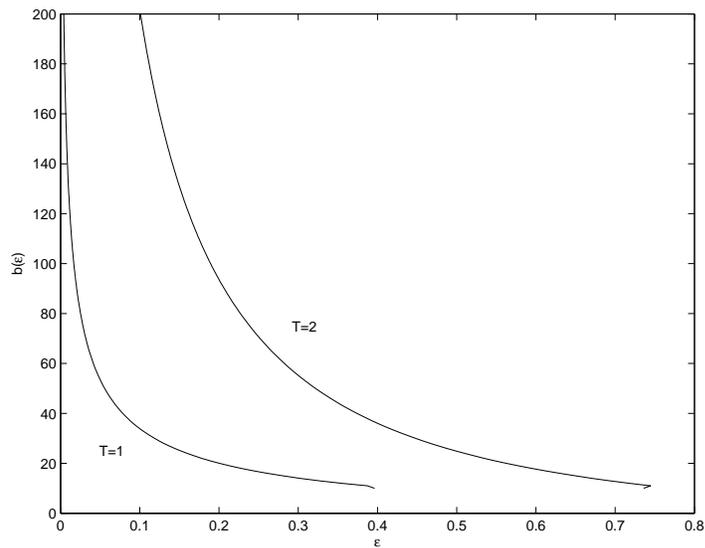}
\caption{ Dividend payout level $b( \varepsilon )$ as a function of $\varepsilon$ ( Parameters:$ \sigma^2=\mu=1, \sigma^2_p=2, m=1$).}\label{test2T=1,2}
\end{figure}
\begin{example}
Let $\mu=1, \sigma^2_p=2, m=1$ and solve $b( \varepsilon )$ by $1-\phi(T,b)=\varepsilon$, we get the figure \ref{test3sigma=1,41} below. It portrays that the dividend payout level $b$ is an increasing function of underwriting risk $\sigma^2$, so it decreases the company's profit.
\end{example}
 \begin{figure}[H]
 \includegraphics[width=0.75\textwidth]{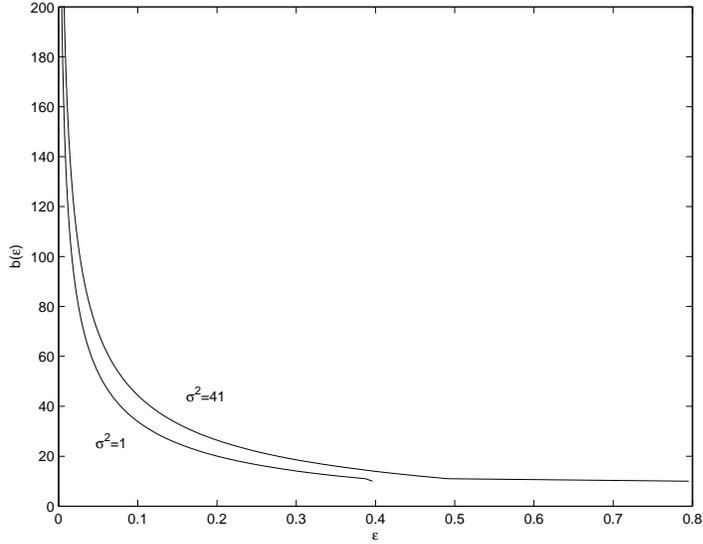}
\caption{ Dividend payout level $b( \varepsilon )$ as a function of $\varepsilon$ ( Parameters: $ \mu=1, \sigma^2_p=2, m=1$).}\label{test3sigma=1,41}
\end{figure}
 \setcounter{equation}{0}
\section{\bf Properties on bankrupt  probability  and $V(x,b)$}
\vskip 10pt\noindent
In this section, to prove Theorem 3.1, we list some lemmas on properties of  bankrupt  probability and $V(x,b)$ which will be used late. The rigorous proofs of these lemmas will be given in the appendix below.
\begin{Lemma}\label{lemma4.1} The  probability of bankruptcy
$\mathbb{P}[\tau_{b}^b \leq T]$ is a decreasing function of  $b $,
where $\tau_{b}^b:=\tau_{b}^{\pi^*_b}$.
\end{Lemma}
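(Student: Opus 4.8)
The plan is to establish the monotonicity by a pathwise coupling. Fix $m\le b_1<b_2$ and realize the reflected processes $R^{\pi_{b_1}^\ast}$ and $R^{\pi_{b_2}^\ast}$ of (\ref{c21}) on a common probability space, driven by the \emph{same} Brownian motion $\{W_t\}$ and started at $b_1$ and $b_2$ respectively. I claim that $R_t^{\pi_{b_1}^\ast}\le R_t^{\pi_{b_2}^\ast}$ for all $t\ge 0$ almost surely. Granting this, since $\{R_t^{\pi_{b_1}^\ast}\le m\}\supseteq\{R_t^{\pi_{b_2}^\ast}\le m\}$ for every $t$, we obtain
\[
\tau_{b_1}^{\pi_{b_1}^\ast}=\inf\{t:R_t^{\pi_{b_1}^\ast}\le m\}\ \le\ \inf\{t:R_t^{\pi_{b_2}^\ast}\le m\}=\tau_{b_2}^{\pi_{b_2}^\ast}\qquad\text{a.s.},
\]
so $\{\tau_{b_2}^{\pi_{b_2}^\ast}\le T\}\subseteq\{\tau_{b_1}^{\pi_{b_1}^\ast}\le T\}$ and hence $\mathbb{P}[\tau_{b_2}^{\pi_{b_2}^\ast}\le T]\le\mathbb{P}[\tau_{b_1}^{\pi_{b_1}^\ast}\le T]$, which is the assertion.

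To prove the pathwise inequality I would approximate the upper-reflected SDE (\ref{c21}) by penalization. For $n\ge 1$ and $i\in\{1,2\}$ let $R^{(i),n}$ be the unique strong solution (existence and uniqueness hold because $a^\ast$ is bounded and Lipschitz) of
\[
dR_t^{(i),n}=\Big(a^\ast(R_t^{(i),n})\mu+rR_t^{(i),n}-n\big(R_t^{(i),n}-b_i\big)^+\Big)dt+\sqrt{\big(a^\ast(R_t^{(i),n})\big)^2\sigma^2+\sigma_p^2\,(R_t^{(i),n})^2}\;dW_t,
\]
with $R_0^{(i),n}=b_i$. By the penalization theory for reflecting diffusions on a half-line (which underlies the construction of $R^{\pi_b^\ast}$ via Lions--Sznitman \cite{s10100}), $R^{(i),n}\to R^{\pi_{b_i}^\ast}$ as $n\to\infty$, uniformly on compact time intervals in probability. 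For fixed $n$ the diffusion coefficient $x\mapsto\sqrt{(a^\ast(x))^2\sigma^2+\sigma_p^2x^2}$ is the same for $i=1$ and $i=2$ and is locally Hölder continuous of order $1/2$, which together with the linear growth of the coefficients suffices for the scalar comparison theorem; the drifts satisfy, for every $x$,
\[
a^\ast(x)\mu+rx-n(x-b_1)^+\ \le\ a^\ast(x)\mu+rx-n(x-b_2)^+
\]
because $b_1\le b_2$ gives $(x-b_1)^+\ge(x-b_2)^+$, and $R_0^{(1),n}=b_1\le b_2=R_0^{(2),n}$. Hence the one-dimensional comparison theorem for It\^o processes — the same tool used in the proof of Theorem \ref{theorem41} — yields $R_t^{(1),n}\le R_t^{(2),n}$ for all $t\ge 0$ a.s., and letting $n\to\infty$ along a subsequence realizing the a.s. uniform convergence gives $R_t^{\pi_{b_1}^\ast}\le R_t^{\pi_{b_2}^\ast}$ for all $t\ge 0$ a.s.

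The only genuinely delicate point is the passage to the limit: one must know that the penalized diffusions converge to the solution of (\ref{c21}) in a mode strong enough (uniform on compacts, in probability) to transfer the almost sure pathwise ordering to the limit. This is classical for reflecting diffusions on $[m,\infty)$ with Lipschitz, linearly growing coefficients, and the reference \cite{s10100} already underlies the very definition of $R^{\pi_b^\ast}$; the remaining ingredients (strong well-posedness of the penalized SDEs, the ordering of the penalized drifts, and the scalar comparison theorem) are routine. As an alternative one could argue directly by examining $\theta:=\inf\{t:R_t^{\pi_{b_1}^\ast}>R_t^{\pi_{b_2}^\ast}\}$ and using that at the reflection times of $R^{\pi_{b_1}^\ast}$ its local-time term only pushes it downward while it sits at level $b_1<b_2$; but the penalization route is the cleaner one to make rigorous.
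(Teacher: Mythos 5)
Your coupling-via-penalization argument is correct and gives a complete proof of the monotonicity. The paper itself does not present a proof for this lemma; it simply defers to Theorem 3.1 of \cite{HHL}. Since the rest of the paper (Theorem \ref{theorem41}, Lemma \ref{lemma4.2}) leans heavily on the scalar comparison theorem for one-dimensional It\^o processes, your route is almost surely the same one the cited reference takes, only spelled out more carefully: compare the upper-reflected processes with common noise by penalizing the reflecting barrier, order the penalized drifts using $(x-b_1)^+\ge(x-b_2)^+$, apply the scalar comparison theorem (the common diffusion coefficient $x\mapsto\sqrt{(a^\ast(x))^2\sigma^2+\sigma_p^2 x^2}$ is in fact globally Lipschitz, since $a^\ast$ is piecewise linear and bounded, which is more than enough), and pass to the penalization limit along a subsequence converging a.s. uniformly on compacts.

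Two small remarks. First, in the final paragraph you describe the limiting objects as ``reflecting diffusions on $[m,\infty)$''; the reflection in (\ref{c21}) is at the \emph{upper} barrier $b$ (the local-time term $-dL^{\pi_b^\ast}$ only acts there), while $m$ is an absorbing level defining $\tau_b^{\pi_b^\ast}$. This is just a slip of the pen and the penalization $-n(x-b_i)^+$ correctly targets the upper barrier. Second, when you say ``letting $n\to\infty$ along a subsequence realizing the a.s.\ uniform convergence,'' you should make explicit that a single subsequence can be chosen that works simultaneously for $i=1$ and $i=2$ (standard diagonal extraction), so the pathwise ordering survives the limit. With these cosmetic clarifications, the argument is sound and self-contained, and it makes rigorous what the paper leaves as a citation.
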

\begin{Lemma}\label{lemma4.2}
\begin{eqnarray}\label{4.9}
\lim_{b\rightarrow  \infty}\mathbb{P}[\tau_{b}^b \leq
T]=0.\end{eqnarray}
\end{Lemma}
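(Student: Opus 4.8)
The plan is to compare the reflected reserve process $R_t^{\pi_b^\ast}$ started at $b$ with an auxiliary process obtained by removing the dividend (reflection) term, exactly as in the proof of Theorem \ref{theorem41}, and then to show that the hitting probability of the level $m$ by that auxiliary process tends to $0$ as $b\to\infty$. First I would recall from the proof of Theorem \ref{theorem41} that the process $R_t^{(1)}$ solving
\begin{eqnarray*}
dR_t^{(1)}=(a^\ast(R_t^{(1)})\mu+rR_t^{(1)})dt+\sqrt{{a^\ast}^2(R_t^{(1)})\sigma^2+\sigma_p^2{R_t^{(1)}}^2}\,dW_t,\quad R_0^{(1)}=b,
\end{eqnarray*}
dominates the reflected process, i.e. $\mathbb{P}\{R_t^{(1)}\geq R_t^{\pi_b^\ast}\ \forall t\}=1$ — wait, that is the wrong direction for an upper bound on a \emph{lower} hitting time. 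Instead I would use that the reflected process stays below $b$ and that, on $\{t\le\tau_b^b\}$, $R_t^{\pi_b^\ast}$ coincides with the non-reflected diffusion started at $b$ until it either hits $m$ or returns; more cleanly, since $L^{\pi_b^\ast}$ is nondecreasing, $R_t^{\pi_b^\ast}\le Y_t$ where $Y_t$ solves the same SDE without $-dL_t$, and $\tau_b^b=\inf\{t:R_t^{\pi_b^\ast}\le m\}\ge\inf\{t:Y_t\le m\}=:\sigma_b$ is false as well. The correct comparison is the one actually needed: $R_t^{\pi_b^\ast}\ge \tilde Y_t$ is what fails, so I would instead bound $\mathbb{P}[\tau_b^b\le T]$ directly via the Girsanov/time-change representation.

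Concretely, following the scheme in Theorem \ref{theorem41}: change measure by the exponential martingale $M_t$ to kill the drift, under $\mathbb{Q}$ the non-reflected process $R_t^{(1)}$ is a time-changed Brownian motion $\hat R_t^{(1)}=b+\hat W_t$ with $\rho(t)\le t/\kappa$, and on the event $\{t\le\tau_b^b\}$ the reflected process agrees with a copy of $R^{(1)}$ run from $b$. Then I would estimate
\begin{eqnarray*}
\mathbb{P}[\tau_b^b\le T]\le \mathbb{E}^{\mathbb{P}}[M_T^2]^{1/2}\,\mathbb{Q}[\tau_b^b\le T]^{1/2}
\end{eqnarray*}
is again the wrong inequality direction; the usable estimate is the reverse one obtained by writing $\mathbb{P}[A]=\mathbb{E}^{\mathbb{Q}}[M_T^{-1}\mathbf 1_A]\le \mathbb{E}^{\mathbb{Q}}[M_T^{-2}]^{1/2}\mathbb{Q}[A]^{1/2}$. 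So the real plan is: (i) bound $\mathbb{E}^{\mathbb{Q}}[M_T^{-2}]$ by a constant $C(T)$ independent of $b$, using boundedness of $a^\ast$ and that $m\le R_t^{(1)}$ keeps the stochastic exponential integrand controlled (here the denominator $\ge(\lambda^2\sigma^2+\sigma_p^2)m^2>0$ and the reflected process is bounded above by $b$, so care is needed — the bound will depend on $b$ through the upper level, and this is the main obstacle); (ii) show $\mathbb{Q}[\tau_b^b\le T]\to0$ as $b\to\infty$, which follows because under $\mathbb{Q}$, after the time change, one needs a Brownian motion started at $b$ (distance $b-m$ from the barrier) to travel down by $b-m$ within $\mathbb{Q}$-time at most $\rho^{-1}(T)$, and since $\rho^{-1}(T)\le T\sup(\text{diffusion coeff})\le T(\lambda^2\sigma^2+\sigma_p^2)b^2$ the relevant probability is $2[1-\Phi(\tfrac{b-m}{\sqrt{\text{const}\cdot T}\,b})]$, which does \emph{not} go to $0$ — indicating the time change must be controlled more carefully on the reflected process where $R_t^{\pi_b^\ast}$ need not reach magnitude $b$ once it has drifted down.

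Because of this scaling tension, the cleaner route — and the one I would actually write — is to avoid the non-reflected comparison and instead use the explicit candidate value function. By Lemma \ref{lemma1} and Theorem \ref{theorem31}(i), $V(x,0)=F_{b_0}(x)=h(x)/h'(b_0)$ is finite, and the reflected process $R_t^{\pi_b^\ast}$ with reflection at $b$ has a stationary/ergodic behaviour; more to the point, as $b\to\infty$ the process started at $b$ spends, on any fixed $[0,T]$, most of its time at large values, and I would invoke the comparison theorem for one–dimensional Itô processes together with the fact (from the proof of Theorem \ref{theorem41}) that the \emph{unreflected} diffusion $R_t^{(1)}$ started at $b$ satisfies $R_t^{(1)}\ge m$ for all $t$ with probability tending to $1$: precisely, reusing inequality (\ref{244}) which gives $\mathbb{Q}[\inf\{t:R_t^{(1)}\le m\}\le T]=2[1-\Phi(\tfrac{b-m}{\sqrt{\kappa T}})]$ but now with the \emph{upper} time-change bound $\rho(t)\le t/\kappa$ replaced — for the purpose of an \emph{upper} bound on the hitting probability — by the lower bound $\dot\rho(t)\ge 1/((\lambda^2\sigma^2+\sigma_p^2)B^2)$ valid only while $R^{(1)}\le B$; stopping at the first exit from $[m,B]$ and choosing $B=B(b)\to\infty$ slowly (e.g. $B=\sqrt b$) makes both $\mathbb{Q}[\sup_{[0,T]}\hat W_t\ge B-b]\to0$ and $\mathbb{Q}[\inf_{[0,\kappa_B T]}\hat W_t\le m-b]\to0$ where $\kappa_B T\to0$, hence $\mathbb{P}[\tau_b^b\le T]\le \mathbb{E}^{\mathbb{P}}[M_T^2]^{1/2}\,\mathbb{Q}[\tau_b^b\le T]^{1/2}\to0$ once one checks $\mathbb{E}^{\mathbb{P}}[M_T^2]$ stays bounded — it does, by the same estimate $\mathbb{E}^{\mathbb{P}}[M_T^2]\le\exp\{(\lambda\mu+r)^2T/\sigma_p^2\}$ used in Theorem \ref{theorem41}, which is independent of $b$. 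The main obstacle, and the step I would spend the most care on, is managing the two competing effects of sending $b\to\infty$: it pushes the starting point away from the barrier (good) but inflates the diffusion coefficient $\sigma_p^2x^2$ near the start (bad); the resolution is the intermediate localisation level $B(b)$ chosen with $B(b)\to\infty$ and $B(b)/b\to0$, so that on the event the process never reaches $B(b)$ the effective variance over $[0,T]$ is $O(B(b)^2 T)=o(b^2)$, making the Gaussian tail $1-\Phi((b-m)/\sqrt{\text{const}\,B(b)^2T})\to0$.
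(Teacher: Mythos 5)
Your proposal correctly identifies the central tension: the diffusion coefficient $\sigma_p^2 x^2$ near the starting level $x=b$ grows like $b^2$, so the time--change bound from Theorem~\ref{theorem41} produces a Gaussian tail $1-\Phi\bigl((b-m)/\sqrt{\text{const}\cdot b^2 T}\bigr)$ that does \emph{not} vanish as $b\to\infty$. The fix you sketch, however, is internally inconsistent. You propose to stop the process at the first exit from $[m,B(b)]$ with $B(b)\to\infty$ and $B(b)/b\to 0$, e.g.\ $B(b)=\sqrt b$, and to estimate $\mathbb{Q}[\sup_{[0,T]}\hat W_t\ge B(b)-b]$. But $\hat W$ starts at $0$, its supremum over $[0,T]$ is nonnegative a.s., and $B(b)-b<0$, so that probability is $1$. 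Taking instead $B(b)>b$ restores positivity but forces the localized diffusion coefficient back to order $B(b)^2\ge b^2$, and the two requirements $B(b)>b$ and $B(b)/b\to0$ are incompatible. So the localization step fails as stated, and there is no obvious repair within the Girsanov/time--change framework you set up.

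The paper's proof resolves the same tension by a move that is absent from your proposal: it lowers the \emph{starting point} rather than capping the level. By monotonicity of the hitting probability in the initial reserve (same argument as Theorem~3.1 of \cite{HHL}), for $n>3$ and $b$ large one has
\begin{eqnarray*}
\mathbb{P}[\tau_b^b\le T]\ \le\ \mathbb{P}[\tau^{b}_{\sqrt[n]{b}}\le T].
\end{eqnarray*}
One then observes that the reflected process started at $\sqrt[n]{b}$ agrees with the \emph{unreflected} diffusion $R^{(2)}$ (also started at $\sqrt[n]{b}$) until $R^{(2)}$ exits $(m,b)$; hence
\begin{eqnarray*}
\mathbb{P}[\tau^{b}_{\sqrt[n]{b}}\le T]\ \le\ \mathbb{P}\{\sup_{0\le t\le T}R_t^{(2)}\ge b\}+\mathbb{P}\{\inf_{0\le t\le T}R_t^{(2)}\le m\}.
\end{eqnarray*}
The first term is handled by a Gronwall-type $L^2$ moment estimate plus Markov's inequality, which gives a bound of order $b^{2/n}/b^2\to 0$; the second by writing $R^{(2)}_t=\mathcal{E}(M_1)_t\bigl(\sqrt[n]{b}+\int_0^t\cdots\,ds\bigr)$, so that $\inf R^{(2)}\ge \sqrt[n]{b}\,\inf\mathcal{E}(M_1)$, and since $\langle M_1\rangle_T$ is bounded uniformly in $b$, $\mathbb{P}\{\inf_{0\le t\le T}\mathcal{E}(M_1)_t\le m/\sqrt[n]{b}\}\to 0$. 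No Girsanov change of measure and no Brownian time change are used. In short, the missing idea in your write-up is to push the starting point down to $\sqrt[n]{b}$ via monotonicity: this simultaneously tames the diffusion coefficient near the start and leaves the starting point far (on the relevant scale) from both boundaries $m$ and $b$, after which the two exit probabilities of the unreflected diffusion are estimated by elementary means.
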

\begin{Lemma}\label{lemma4.3}
Let $\phi(t,x)\in C^1(0,\infty)\cap C^2(m,b)$ and satisfy the
following partial differential equation
\begin{eqnarray}\label{4.16}
\left\{
\begin{array}{l l l}
\phi_{t}(t,y)=\frac{1}{2}[{a^\ast} ^2(x)\sigma^2+\sigma_p ^2
x^2]\phi_{xx}(
t,x)+[a^\ast(x) \mu+rx]\phi_{x}(t,x),\\
\phi(0,x)=1,\  \mbox{for}\ \  m<x\leq b, \\
\phi(t,m)=0,\phi_{x}(t,b)=0,\ \mbox{for} \ t>0.
\end{array}\right.
\end{eqnarray}
Then $\phi(T,x)=1-\psi^{b}(T,x)$, i.e., $\phi^{b}(T,x)$ is
probability that the company will survive on time interval $[0,T]$, the function $ \psi^b(t,x)$ is defined   by
\begin{eqnarray*}
\psi^b(t,x):=\mathbb{P}[\tau_{x}^{b}\leq t],
\end{eqnarray*}
 where $\tau_{b}^x:=\tau_{x}^{\pi^*_b}$,
i.e., probability of bankruptcy for the process
$\{R_{t}^{\pi^*_b,x}\}_{t\geq 0} $  with the initial asset  $x$  and
a dividend barrier $b$ is employed before time $t$.
where $ a^\ast(\cdot)$ is defined by (\ref{33}).
\end{Lemma}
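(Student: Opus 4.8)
The plan is to identify $\phi(T,x)$ with a survival probability via a Feynman–Kac / Kolmogorov backward-equation argument applied to the reflected diffusion $\{R_t^{\pi_b^\ast,x}\}$ governed by (\ref{c21}) with initial value $R_0^{\pi_b^\ast,x}=x\in(m,b]$. First I would recall that the generator of this diffusion on $(m,b)$ is precisely
\begin{eqnarray*}
\mathcal{A}f(x)=\tfrac12\big[{a^\ast}^2(x)\sigma^2+\sigma_p^2x^2\big]f''(x)+\big[a^\ast(x)\mu+rx\big]f'(x),
\end{eqnarray*}
so that the PDE (\ref{4.16}) is the backward equation $\phi_t=\mathcal{A}\phi$ with initial datum $\phi(0,\cdot)\equiv 1$, absorbing (Dirichlet) condition $\phi(t,m)=0$ at the bankruptcy level $m$, and Neumann condition $\phi_x(t,b)=0$ at the reflecting barrier $b$. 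The absorbing condition encodes killing the process at $\tau_x^{\pi_b^\ast}$; the Neumann condition is the analytic signature of the Skorohod reflection term $-dL_t^{\pi_b^\ast}$ in (\ref{c21}), since the local time pushes inward at $b$ and a function with zero normal derivative there is unaffected by $dL_t$.

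The key step is an application of Itô's formula to $t\mapsto \phi(T-t,\,R_t^{\pi_b^\ast,x})$ on the stochastic interval $[0,\,T\wedge\tau_x^{\pi_b^\ast}]$. Because $\phi\in C^1\cap C^2(m,b)$ solves (\ref{4.16}), the $dt$-terms cancel: the $-\phi_t$ from differentiating the first argument is exactly matched by $\mathcal{A}\phi$ from the diffusion part. The $dL_t^{\pi_b^\ast}$ contribution is $-\phi_x(T-t,b)\,dL_t^{\pi_b^\ast}=0$ by the Neumann condition, since $L^{\pi_b^\ast}$ increases only on $\{R_t^{\pi_b^\ast,x}=b\}$. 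What remains is a stochastic integral against $dW_t$, which is a true martingale after a standard localization (the integrand is bounded on the compact stopped region $[m,b]$). Taking expectations and letting the localizing sequence run to $\infty$ gives
\begin{eqnarray*}
\phi(T,x)=\mathbb{E}\big[\phi\big(T-(T\wedge\tau_x^{\pi_b^\ast}),\,R_{T\wedge\tau_x^{\pi_b^\ast}}^{\pi_b^\ast,x}\big)\big].
\end{eqnarray*}
On $\{\tau_x^{\pi_b^\ast}>T\}$ the right-hand integrand is $\phi(0,R_T^{\pi_b^\ast,x})=1$; on $\{\tau_x^{\pi_b^\ast}\le T\}$ it is $\phi(T-\tau_x^{\pi_b^\ast},\,m)=0$ by the Dirichlet condition. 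Hence $\phi(T,x)=\mathbb{P}[\tau_x^{\pi_b^\ast}>T]=1-\mathbb{P}[\tau_x^{\pi_b^\ast}\le T]=1-\psi^b(T,x)$, which is the claim; replacing $T$ by an arbitrary $t>0$ gives the statement for all times.

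I expect the main obstacle to be the careful justification of the Itô expansion near the two boundaries: one must check that the process spends zero Lebesgue time at $\{x=m\}$ before $\tau$ (immediate, since $\tau$ is the first hitting time of $m$) and, more delicately, handle the behaviour at the reflecting barrier $b$, where $\phi$ is only assumed $C^2$ on the open interval — here I would either invoke one-sided $C^2$ regularity up to $b$ (standard for Neumann problems of this type, cf. the references already cited) or approximate $b$ from inside and pass to the limit using $\phi_x(t,b)=0$. A secondary technical point is verifying the uniform integrability needed to remove the localization, which follows from $0\le\phi\le 1$ (a maximum-principle bound for (\ref{4.16})) together with the boundedness of $R_t^{\pi_b^\ast,x}$ on $[m,b]$. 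Apart from these boundary bookkeeping matters the argument is the routine probabilistic representation of a parabolic Dirichlet–Neumann problem, so I would keep the write-up brief and refer to Shreve–Lehoczky–Gaver \cite{s44} and Paulsen–Gjessing \cite{s1015} for the analogous computations.
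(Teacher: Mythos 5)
Your proof is correct and follows essentially the same route as the paper: apply Itô's formula to $\phi(T-(t\wedge\tau_x^b),R^b_{t\wedge\tau_x^b})$, use the PDE (\ref{4.16}) to cancel the $ds$-terms, use the Neumann condition $\phi_x(t,b)=0$ together with the support of $dL^{\pi_b^\ast}$ to annihilate the local-time integral, take expectations to kill the stochastic integral, and then read off the two alternatives at $t=T$ via the Dirichlet condition $\phi(\cdot,m)=0$ and the initial datum $\phi(0,\cdot)=1$. Your remarks on localization, boundary regularity, and the a priori bound $0\le\phi\le1$ are extra care that the paper leaves implicit, but they do not change the argument.
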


\vskip 5pt\noindent Let
$\sigma(x):=\frac{1}{2}[{a^\ast}^2(x)\sigma^2+\sigma_p ^2 x^2]$ and
$\mu(x):=a^\ast(x) \mu+rx $. Then the equation (\ref{4.16}) becomes
\begin{eqnarray}\label{eq4.11}
\phi_{t}(t,x)=\sigma^2(x)\phi_{xx}( t,x)+\mu(x)\phi_{x}(t,x).
\end{eqnarray}
By  properties of $a^\ast (\cdot)$, it is  easy to show that
$\sigma(x)$ and $\mu(x)$ are continuous in $[m,b]$. So there exists
a unique solution  (\ref{4.16}) and the solution is in
$C^1(0,\infty)\cap C^2(m,b)$. Moreover, $\sigma^{'}(x)$ and
$\mu^{'}(x)$ are bounded on $(m,x_0)$ and $(x_0,b)$ respectively.
\begin{Lemma}\label{lemma4.4}
Let $\phi^{b}(t,x)$ be a solution of  the equation(\ref{4.16}). Then
the $\phi^{b}(T,b)$ is a continuous  function of  $b$ on $[ b_0,
\infty)$.
\end{Lemma}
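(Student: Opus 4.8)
The plan is to start from the probabilistic meaning of $\phi^b(T,b)$ supplied by Lemma \ref{lemma4.3}, namely $\phi^b(T,b)=\mathbb{P}[\tau_b^{\pi^*_b}>T]$, where $\{R_t^{\pi^*_b}\}$ solves (\ref{c21}) with $R_0^{\pi^*_b}=b$ — it is reflected at the upper barrier $b$ and absorbed at $m$. Fix $b\geq b_0$ and an arbitrary sequence $b_n\to b$ with $b_n\geq b_0$. I would realize all the processes $R^{(n)}:=R^{\pi^*_{b_n}}$ and $R:=R^{\pi^*_b}$ on one probability space, driven by the \emph{same} Brownian motion $W$; existence and pathwise uniqueness of each reflected strong solution is guaranteed by the cited result of Lions and Sznitman, because $a^\ast(\cdot)$ is bounded and Lipschitz and the diffusion coefficient $\sqrt{(a^\ast(x))^2\sigma^2+\sigma_p^2x^2}$ is Lipschitz on $[m,\infty)$ and bounded away from $0$ there (recall $m>0$).

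The first main step is to show $\sup_{0\le t\le T}|R^{(n)}_t-R_t|\to 0$ in probability as $n\to\infty$. This combines two standard ingredients: (a) the Lipschitz dependence of the one-sided Skorokhod reflection map on the position of the reflecting barrier, contributing an error of order $|b_n-b|$ in the local-time (dividend) terms $L^{\pi^*_{b_n}}-L^{\pi^*_b}$; and (b) a Gronwall estimate, valid up to the stopped absorption time at $m$, using the Lipschitz bounds on the coefficients. The kink of $a^\ast(\cdot)$ at $x_0$ is harmless because $x_0<b_0\le b_n$, so on the whole range $[m,b_n]$ that actually matters the coefficients are genuinely Lipschitz. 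Next, since the diffusion coefficient at $m$ equals $\sqrt{(a^\ast(m))^2\sigma^2+\sigma_p^2m^2}>0$, the endpoint $m$ is a regular boundary point for the absorbed diffusion, so on $\{\tau_b<\infty\}$ the path $R$ strictly crosses below $m$ immediately after $\tau_b$; hence $\tau_b$ is almost surely a continuity point of the first-passage functional, and the uniform path convergence of step (a)–(b) yields $\tau_{b_n}^{\pi^*_{b_n}}\to\tau_b^{\pi^*_b}$ in probability.

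It remains to exclude an atom of the law of $\tau_b^{\pi^*_b}$ at $T$: the absorbed diffusion possesses, on every subinterval $[m+\delta,b]$, a smooth transition density by uniform ellipticity, so $\tau_b^{\pi^*_b}$ has an atomless distribution and in particular $\mathbb{P}[\tau_b^{\pi^*_b}=T]=0$. Consequently $\mathbf 1_{\{\tau_{b_n}^{\pi^*_{b_n}}>T\}}\to\mathbf 1_{\{\tau_b^{\pi^*_b}>T\}}$ in probability, and bounded convergence gives $\phi^{b_n}(T,b_n)=\mathbb{P}[\tau_{b_n}^{\pi^*_{b_n}}>T]\to\mathbb{P}[\tau_b^{\pi^*_b}>T]=\phi^b(T,b)$, which is the assertion. (By Lemma \ref{lemma4.1} the map $b\mapsto\phi^b(T,b)$ is monotone, so it would even suffice to verify one-sided limits along monotone sequences, but the coupling argument delivers full continuity directly.)

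The step I expect to be the main obstacle is the uniform estimate $\sup_{t\le T}|R^{(n)}_t-R_t|\to 0$: one has to control the reflecting local-time terms $L^{\pi^*_{b_n}}$ as the barrier $b_n$ moves, combining the Skorokhod-map perturbation bound with the Gronwall argument run up to absorption, and checking that all constants are uniform in $n$. An alternative, purely analytic route is to rescale the moving domain $[m,b]$ onto the fixed interval $[0,1]$ by $y=(x-m)/(b-m)$, which turns (\ref{4.16}) into a uniformly parabolic problem whose coefficients depend continuously on $b$, and then invoke continuous dependence of parabolic solutions on parameters, reading off $\phi^b(T,b)$ as the value at the fixed point $y=1$; there the analogous difficulty is that the transformed coefficients are only piecewise $C^1$ (a kink at the image of $x_0$), so one would lean on $L^p$-parabolic estimates or a mollification argument rather than classical Schauder theory.
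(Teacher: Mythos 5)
Your argument is correct in outline but takes a genuinely different route from the paper's. The paper takes the analytic path you mention only in passing: it rescales the moving domain $[m,b]$ onto $[0,1]$ via $x=(b-m)y+m$, sets $\theta^b(t,y)=\phi^b(t,(b-m)y+m)$, writes the parabolic equation satisfied by $w=\theta^{b_2}-\theta^{b_1}$ on $[0,T]\times[0,1]$, and then runs an $L^2$ energy estimate --- multiply by $w$, integrate by parts, absorb the cross terms with Young's inequality, and close with Gronwall --- to show $\int_0^t\int_0^1 w^2\,dy\,ds\to 0$ as $b_2\to b_1$; the kink of $a^\ast$ at $x_0$ is handled by splitting the spatial integrals at the two images of $x_0$ under the scalings by $b_1$ and $b_2$ and using boundedness of the coefficients' derivatives on each piece. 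Your approach works entirely on the probabilistic side: you couple the reflected diffusions $R^{\pi^*_{b_n}}$ and $R^{\pi^*_b}$ through the same driving Brownian motion, control $\sup_{t\le T}|R^{(n)}_t-R_t|$ by combining the Lipschitz perturbation bound for the one-sided Skorokhod map (barrier moves by $|b_n-b|$, output moves by at most the same amount) with a Gronwall estimate using the Lipschitz, nondegenerate coefficients on $[m,\infty)$, then invoke regularity of the absorbing endpoint $m>0$ to upgrade uniform path convergence to convergence in probability of the hitting times, and finally rule out an atom of $\tau_b$ at $T$ by uniform ellipticity. The probabilistic route buys you pointwise convergence of the survival probability directly from pathwise closeness; by contrast, the paper's $L^2$-energy argument yields only $L^2(dt\,dy)$ convergence of $\theta^{b_n}$ to $\theta^{b}$, and extracting convergence of the boundary value $\theta^b(T,1)$ from this actually needs an additional equicontinuity or parabolic-regularity step that the paper glosses over. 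Conversely, the analytic route sidesteps the Skorokhod-map perturbation estimate, which you correctly flag as the main technical burden in your approach, and it also sidesteps the need to extend the dynamics below $m$ so that the stopped hitting time can be read off the unstopped paths.
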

\begin{Lemma}\label{lemma2}
Let $F_b(x)$ be defined by (\ref{32}) and $b_0$ be given  by part
(ii) of Lemma \ref{lemma1}. Then
\begin{eqnarray}\label{fl14}
\mathcal {L} F_b(x)\leq 0,\ \mbox{ for all   $x\geq 0$},
\end{eqnarray}
where
\begin{eqnarray*}
\mathcal {L} =\frac{1}{2}(a^2 \sigma ^2 +\sigma_p^2
x^2)\frac{d^{2}}{dx^{2}}+(a\mu +rx)\frac{d}{dx}-c.
\end{eqnarray*}
\end{Lemma}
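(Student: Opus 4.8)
The plan is to check the inequality $\mathcal{L}F_b(x)\le 0$ (uniformly over the parameter $a\in[0,1]$ appearing in $\mathcal{L}$) separately on each of the three pieces $0\le x<m$, $m\le x\le b$ and $x\ge b$ on which $F_b$ is given by the corresponding branch of (\ref{32}); at the join $x=b$, where $F_b$ stays $C^1$ and only $F_b''$ may jump, the inequality is read off from the one-sided derivatives. Throughout I take $b\ge b_0$, which is the range actually used in Theorem \ref{theorem31}; the point of this hypothesis is that then $a^\ast(b)=1$ (since $b\ge b_0>x_0$ and $a^\ast$ is given by (\ref{33})) and $h''(b)\ge 0$ by part (ii) of Lemma \ref{lemma1}, with $h''(b_0)=0$.

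For $0\le x<m$ we have $F_b\equiv 0$, so $F_b'=F_b''=0$ and $\mathcal{L}F_b(x)=0$. For $m\le x\le b$ we have $F_b=h/h'(b)$ with $h'(b)>0$ by part (i) of Lemma \ref{lemma1}, so
\[
\mathcal{L}F_b(x)=\frac{1}{h'(b)}\Big(\tfrac12[\sigma^2a^2+\sigma_p^2x^2]h''(x)+[\mu a+rx]h'(x)-ch(x)\Big),
\]
which for every $a\in[0,1]$ is at most $h'(b)^{-1}$ times the maximum appearing in the HJB equation (\ref{31}), hence $\le 0$. These two cases are immediate; the real work is the third.

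For $x\ge b$ the branch $F_b(x)=x-b+h(b)/h'(b)$ is affine, so $F_b'\equiv 1$, $F_b''\equiv 0$, and $\mathcal{L}F_b(x)=\mu a+(r-c)x+cb-ch(b)/h'(b)$. Since $r\le c$ this is nonincreasing in $x\ge b$, and it is nondecreasing in $a\in[0,1]$, so it is bounded above by its value at $x=b$, $a=1$, namely $h'(b)^{-1}\big((\mu+rb)h'(b)-ch(b)\big)$. Now evaluate (\ref{31}) at $x=b$: since the maximizer there is $a^\ast(b)=1$ and the maximum is $0$, this gives $(\mu+rb)h'(b)-ch(b)=-\tfrac12[\sigma^2+\sigma_p^2b^2]h''(b)\le 0$, using $h''(b)\ge 0$. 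Hence $\mathcal{L}F_b(x)\le 0$ for every $x\ge b$, and combining the three regions proves the lemma.

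The main obstacle is precisely this third region. One must first recognize that $b\ge b_0$ is indispensable — for $b<b_0$ one has $h''(b)<0$, and the HJB identity at $b$ then makes $\mathcal{L}F_b(b^+)$ strictly positive — then exploit $r\le c$ to push the supremum in $x$ down to the barrier $x=b$ and the monotonicity in $a$ to push it to $a=1$, and finally reinsert the HJB identity at $x=b$ together with the two sign facts $a^\ast(b)=1$ and $h''(b)\ge 0$ supplied by (\ref{33}), $b_0>x_0$ and part (ii) of Lemma \ref{lemma1}. The other two regions reduce to a one-line substitution into the HJB equation.
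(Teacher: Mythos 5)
Your proof is correct and takes essentially the same route as the paper's: the same three-region split, the same one-line reduction to the HJB equation on $[m,b]$, and on $x\ge b$ the same chain of inequalities — $a\le 1$ and $r\le c$ to push the supremum to $a=1$, $x=b$, then the HJB identity at $b$ together with $h''(b)\ge 0$ (valid since $b\ge b_0$) to close. Your version is in fact slightly cleaner, since it states explicitly that the HJB maximizer at $x=b$ is $a^\ast(b)=1$, a detail the paper's final line leaves implicit.
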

\vskip 10pt\noindent
\begin{Lemma}\label{lemma3}
(i) For any  $b\leq b_0$ we have
$V(x,b)=V(x,b_0)=V(x)=F_{b_0}(x)=J(x,\pi_{b_o}^\ast)   $. Moreover,
the optimal  policy is
$\pi_{b_o}^\ast=\{a^\ast(R^{\pi_{b_o}^\ast}_t),L^{\pi_{b_o}^\ast}_t\}$,
where $ ( R^{\pi_{b_o}^\ast}_t, L^{\pi_{b_o}^\ast}_t       )  $
is uniquely determined by the  SDE (\ref{34}). \\
 (ii) For any  $b\geq
b_0$ we have $V(x,b)=F_b(x)=J(x,\pi_{b}^\ast)  $. The optimal
 policy
 $\pi_{b}^\ast=\{a^\ast(R^{\pi_{b}^\ast}_t),L^{\pi_{b}^\ast}_t\}$,
 where $ ( R^{\pi_{b}^\ast}_t, L^{\pi_{b}^\ast}_t    ) $
is uniquely determined by the SDE(\ref{37}).
\end{Lemma}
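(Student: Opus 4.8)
The plan is a classical verification argument, relying on two facts already available: (a) the supersolution inequality $\mathcal{L}^{a}F_b(x)\le 0$ for every $a\in[0,1]$ and every $x\ge 0$ (Lemma \ref{lemma2}), with $\mathcal{L}^{a}=\tfrac12(a^2\sigma^2+\sigma_p^2x^2)\tfrac{d^2}{dx^2}+(a\mu+rx)\tfrac{d}{dx}-c$; and (b) that $a^\ast$ of (\ref{33}) is the pointwise maximiser in the HJB equation (\ref{31}), so that $F_b$ defined by (\ref{32}) satisfies $\mathcal{L}^{a^\ast(x)}F_b(x)=0$ on $[m,b)$, is $C^1$ on $[m,\infty)$ with $F_b'(b)=1$, and is $C^2$ off $\{b\}$. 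I would also record the gradient bound coming from Lemma \ref{lemma1}: since $h'$ attains its minimum over $[m,\infty)$ at $b_0$, one has $F_{b_0}'(x)=h'(x)/h'(b_0)\ge 1$ for all $x\ge m$, with equality for $x\ge b_0$; for a general $b\ge b_0$ one only has $F_b'\equiv 1$ on $[b,\infty)$, but that is exactly the region where dividends can be paid under any $\pi\in\Pi_b$.

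\emph{Step 1 (upper bounds).} Fix an admissible $\pi=\{a_\pi(t),L_t^\pi\}$ with $R_0^\pi=x$ and write $\tau=\tau_x^\pi$. I would apply It\^{o}'s formula to $e^{-ct}F_{b_0}(R_t^\pi)$ on $[0,t\wedge\tau\wedge\theta_n]$, where $\theta_n=\inf\{s:R_s^\pi\ge n\}$ localises the stochastic integral (no local-time term appears because $F_{b_0}\in C^1$). Bounding the drift by (a), the continuous part of $-dL^\pi$ via $F_{b_0}'\ge1$, and each jump of $L^\pi$ via $F_{b_0}(R_{s-}^\pi)-F_{b_0}(R_s^\pi)\ge R_{s-}^\pi-R_s^\pi$ (mean value theorem, again $F_{b_0}'\ge1$), then taking expectations to remove the martingale, I get
\[
\mathbb{E}\big[e^{-c(t\wedge\tau\wedge\theta_n)}F_{b_0}(R^\pi_{t\wedge\tau\wedge\theta_n})\big]\le F_{b_0}(x)-\mathbb{E}\Big[\int_0^{t\wedge\tau\wedge\theta_n}e^{-cs}\,dL_s^\pi\Big].
\]
Letting $n\to\infty$ and then $t\to\infty$ and using $F_{b_0}\ge0$ gives $J(x,\pi)\le F_{b_0}(x)$, hence $V(x,b)\le F_{b_0}(x)$ for all $b\ge0$ and $V(x)\le F_{b_0}(x)$. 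For $b\ge b_0$ I would repeat the estimate with $F_b$ instead of $F_{b_0}$, but restricted to $\pi\in\Pi_b$: on $(b_0,b)$, where $F_b'\ge1$ is unavailable, one has $dL^\pi=0$, and each jump of $L^\pi$ goes from $R_{s-}^\pi\ge b$ to $R_s^\pi\ge b$, a range on which $F_b'\equiv1$; so the same inequality survives and $V(x,b)\le F_b(x)$.

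\emph{Step 2 (optimality of the candidate policies).} For $b\ge m$, let $\{R_t^{\pi_b^\ast},L_t^{\pi_b^\ast}\}$ be the unique solution of the reflected SDE (\ref{37}) (which is (\ref{34}) when $b=b_0$); existence and uniqueness follow from the Lipschitz continuity of $a^\ast$ and of $x\mapsto\sqrt{(a^\ast(x))^2\sigma^2+\sigma_p^2x^2}$ on $[m,b]$ together with the Skorohod reflection result of Lions and Sznitman \cite{s10100}, and $\pi_b^\ast\in\Pi_b$ is admissible. Applying It\^{o}'s formula to $e^{-ct}F_b(R_t^{\pi_b^\ast})$ on $[0,t\wedge\tau]$ with $\tau=\tau_b^{\pi_b^\ast}$: on $\{R^{\pi_b^\ast}<b\}$ one has $dL^{\pi_b^\ast}=0$ and $\mathcal{L}^{a^\ast}F_b=0$ by (b); the reflection acts only at $R=b$, where $F_b'(b)=1$; the reserve stays in $[m,b]$; and $R_\tau=m$ on $\{\tau<\infty\}$. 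Since $[m,b]$ is compact and $F_b$ is bounded and nonnegative on it, the stochastic integral is a true martingale, and taking expectations yields the \emph{identity}
\[
\mathbb{E}\big[e^{-c(t\wedge\tau)}F_b(R^{\pi_b^\ast}_{t\wedge\tau})\big]=F_b(x)-\mathbb{E}\Big[\int_0^{t\wedge\tau}e^{-cs}\,dL_s^{\pi_b^\ast}\Big].
\]
Because $F_b(m)=0$ and $0\le F_b\le F_b(b)$ on $[m,b]$, dominated convergence shows the left side tends to $0$ as $t\to\infty$, whence $J(x,\pi_b^\ast)=F_b(x)$.

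\emph{Step 3 (conclusion) and main difficulty.} Taking $b=b_0$ in Steps 1--2 gives $V(x,b_0)=F_{b_0}(x)=J(x,\pi_{b_0}^\ast)$, with $\pi_{b_0}^\ast$ optimal. For $b\le b_0$, $\pi_{b_0}^\ast\in\Pi_{b_0}\subset\Pi_b$, so $V(x,b)\ge J(x,\pi_{b_0}^\ast)=F_{b_0}(x)$, while Step 1 gives $V(x,b)\le F_{b_0}(x)$; hence $V(x,b)=V(x,b_0)=F_{b_0}(x)$, $V(x)=\sup_{b\ge0}V(x,b)=F_{b_0}(x)$, and $\pi_{b_0}^\ast$ is optimal, proving (i). For $b\ge b_0$, Step 1 gives $V(x,b)\le F_b(x)$ and Step 2 gives $F_b(x)=J(x,\pi_b^\ast)\le V(x,b)$, so $V(x,b)=F_b(x)=J(x,\pi_b^\ast)$ with $\pi_b^\ast$ optimal, proving (ii). The main obstacle I expect is Step 2: constructing the reflected reserve process from (\ref{37}), justifying It\^{o}'s formula for the only-$C^1$ function $F_b$ at the barrier $b$, using the exact HJB identity $\mathcal{L}^{a^\ast}F_b=0$ together with the Neumann condition $F_b'(b)=1$ to turn the verification inequality into an equality along $\pi_b^\ast$, and controlling $\mathbb{E}[e^{-c(t\wedge\tau)}F_b(R^{\pi_b^\ast}_{t\wedge\tau})]$ as $t\to\infty$ (which also quietly uses that, $L^{\pi_b^\ast}$ being continuous, the process reaches the bankruptcy level exactly at $m$). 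By comparison, the upper bounds in Step 1 are routine once Lemma \ref{lemma2} and the estimate $F_{b_0}'\ge1$ are in hand.
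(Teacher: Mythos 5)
Your verification argument is correct and is exactly the approach the paper invokes: the paper omits the proof of Lemma~\ref{lemma3}, citing Theorem~5.2 of He and Liang \cite{he and Liang}, which is the same It\^{o}/HJB verification scheme you carry out (supersolution bound via Lemma~\ref{lemma2} and $F'\ge1$ where dividends are paid, then equality along the reflected process using $\mathcal{L}^{a^\ast}F_b=0$ and $F_b'(b)=1$). Your handling of the two genuine subtleties --- that $F_b'\ge1$ may fail on $(b_0,b)$ but is only needed on $\{R\ge b\}$ under $\Pi_b$, and that $F_b$ is merely $C^1$ at $b$ so the generalized It\^{o} formula with one-sided second derivative is required --- is consistent with what that reference does.
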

\vskip 10pt\noindent The lemma \ref{lemma3} mainly deals with
relationships among  $F_b(x)$, $V(x,b)$ and $V(x)$  defined by
(\ref{f3}).
\begin{Lemma}\label{lemma4}
For  any $b\geq b_0$ and $x\geq m $,
\begin{eqnarray} \label{flemma4}
\frac{d}{d b}V(x,b)<0.
\end{eqnarray}
Moreover, if $ b_1, b_2\geq b_0$ and $ x \leq \min\{b_1,b_2\}$, then
\begin{eqnarray}\label{flemma41}
\frac{V(x,b_1)}{V(x,b_2)}=\frac{h'(b_2)}{h'(b_1)}.
\end{eqnarray}
\end{Lemma}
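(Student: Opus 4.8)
The plan is to use the explicit formula \eqref{32} for $F_b$ together with the optimality statements of Lemma \ref{lemma3}, which identify $V(x,b)$ with $F_b(x)$ for $b\geq b_0$. First I would fix $x\geq m$ and treat $V(x,b)=F_b(x)$ as a function of $b$ alone. The two cases $m\leq x\leq b$ and $x\geq b$ must be handled separately, but in both cases the $b$-dependence enters only through the single scalar $h'(b)$: for $m\leq x\leq b$ we have $F_b(x)=h(x)/h'(b)$, and for $x\geq b$ we have $F_b(x)=x-b+h(b)/h'(b)$. So the whole statement reduces to understanding the sign of the derivative in $b$ of these two expressions, and the key analytic input is part (ii) of Lemma \ref{lemma1}, namely that $h''$ is negative on $(m,b_0)$, vanishes at $b_0$, and is positive on $(b_0,\infty)$; in particular $h'$ is increasing on $[b_0,\infty)$, so $h'(b)>0$ there and $\frac{d}{db}\,h'(b)=h''(b)\geq 0$.

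For the range $m\leq x\leq b$: differentiating $F_b(x)=h(x)/h'(b)$ in $b$ gives $-h(x)h''(b)/h'(b)^2$. Since $h(x)>0$ for $x>m$ (this follows from $h(m)=0$ and $h'>0$ by part (i) of Lemma \ref{lemma1}) and $h''(b)>0$ for $b>b_0$, this derivative is strictly negative; at $b=b_0$ one checks it is $0$ but the relevant assertion \eqref{flemma4} is for $b\geq b_0$ and strict negativity holds on the open part, which combined with a separate check at $b_0$ using the next case gives the claim. For the range $x\geq b$: here $F_b(x)=x-b+h(b)/h'(b)$, so $\frac{d}{db}F_b(x)=-1+\frac{h'(b)^2-h(b)h''(b)}{h'(b)^2}=-\,\frac{h(b)h''(b)}{h'(b)^2}\leq 0$, again strictly negative for $b>b_0$ by the same sign facts. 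Matching the two cases at $x=b$ and invoking continuity of $F_b$ and of its $b$-derivative (recall $F_b\in C^2([m,\infty)\setminus\{b\})$) closes \eqref{flemma4}.

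For the ratio identity \eqref{flemma41}: with $b_1,b_2\geq b_0$ and $x\leq\min\{b_1,b_2\}$ we are in the regime $m\leq x\leq b_i$ for both $i$, so $V(x,b_i)=F_{b_i}(x)=h(x)/h'(b_i)$ directly from \eqref{32}, and hence $V(x,b_1)/V(x,b_2)=h'(b_2)/h'(b_1)$ by cancelling the common factor $h(x)$ (which is nonzero for $x>m$; the case $x=m$ is trivial since both sides vanish, or one restricts to $x>m$). This part is essentially a one-line computation once Lemma \ref{lemma3} is in hand.

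The only genuine obstacle is the boundary behavior at $b=b_0$: there $h''(b_0)=0$, so the $b$-derivatives computed above vanish at $b_0$ rather than being negative, and one must argue that \eqref{flemma4} — interpreted as strict decrease on $[b_0,\infty)$ — still holds, e.g. by noting $h''>0$ on $(b_0,\infty)$ forces $h(b)h''(b)>0$ there, so $V(x,\cdot)$ is strictly decreasing on $(b_0,\infty)$ and therefore on $[b_0,\infty)$. If instead \eqref{flemma4} is meant pointwise for $b>b_0$ only, there is nothing extra to do. A secondary point requiring a line of care is justifying differentiation under the case split exactly at $x=b$, which is handled by the stated $C^2$-regularity of $F_b$ away from the corner and a limiting argument.
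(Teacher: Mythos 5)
Your proof is correct and supplies the computation that the paper's one-line proof (``direct consequence of Lemma \ref{lemma2} and Lemma \ref{lemma3}'') leaves unstated: identify $V(x,b)=F_b(x)$ via Lemma \ref{lemma3}, then differentiate the explicit formula \eqref{32} in $b$, using $h(x)>0$ (from $h(m)=0$ and Lemma \ref{lemma1}(i)) and $h''(b)>0$ for $b>b_0$ (Lemma \ref{lemma1}(ii)). Your derivative computations in both regimes are right — $-h(x)h''(b)/h'(b)^2$ for $m\le x\le b$ and $-h(b)h''(b)/h'(b)^2$ for $x\ge b$ — and match at $x=b$, so the case split is harmless. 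The ratio identity \eqref{flemma41} is indeed immediate once both $b_i\ge x$. One observation worth making explicit: the paper cites Lemma \ref{lemma2} in its proof, but that HJB inequality enters only indirectly (it underlies the verification argument for Lemma \ref{lemma3}); your appeal to Lemma \ref{lemma1}(ii) for the sign of $h''$ is what actually drives the monotonicity, and it is the more honest citation. You also correctly flag that the derivative vanishes at $b=b_0$ since $h''(b_0)=0$, so \eqref{flemma4} as written holds strictly only on $(b_0,\infty)$ — a small imprecision in the statement that does not affect its downstream use, and which you handle appropriately.
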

\vskip 10pt\noindent
\setcounter{equation}{0}
\section{\bf Appendix  }
\vskip 10pt\noindent In this section we will give the proofs of theorem
 and lemmas we concerned with throughout this paper. \vskip 10pt\noindent
{\bf Proof of theorem \ref{theorem31}}. \
If $\mathbb{P}[\tau_{b_0}^{b_0}\leq T]\leq\varepsilon$, then the
conclusion is obvious because it is just the optimal control problem
 without constraints. \vskip 10pt\noindent
 Assume that  $
\mathbb{P}[\tau _{b_0} ^{b_0}\leq T]>\varepsilon $. By  Lemma
\ref{lemma4.1} and Lemma \ref{lemma4.2}, there exists a unique
$b^\ast (\geq b_0)$ such that
 \begin{eqnarray}\label{eq4.30}
 &&b^\ast=\min\{b:  \mathbb{P}[\tau _{b}^{b}\leq
T]=\varepsilon\}= \min\{ b: b \in \mathfrak{B}\},\\
 &&\mathbb{P}[\tau _b
^b\leq T]> \varepsilon,  \quad \forall b\leq b^\ast,\nonumber\\
&&\mathbb{P}[\tau _b ^b\leq T]\leq \varepsilon,  \quad \forall b\geq
b^\ast.\nonumber
\end{eqnarray}
By Lemma \ref{lemma4}, we know that $V(x,b)$ is decreasing w.r.t. $
b$, so $b^\ast$ satisfies(\ref{eq4.27}). Using Lemma \ref{lemma4.4},
we get $b^\ast\in \mathfrak{B}$ and $ \mathbb{P}[\tau _{b^\ast}
^{b^\ast}\leq T]=\varepsilon $.
 Moreover, by Lemma \ref{lemma3} and
(\ref{eq4.30}), we have
\begin{eqnarray*}\label{eq4.31}
 F_{b^\ast}(x)=V(x,b^\ast)=J(x,\pi_{b^*}^\ast)=V(x).
 \end{eqnarray*}  So the optimal policy associated with
  the optimal return function$
V(x)$ is $\{a^\ast(R^{\pi_{b^*}^\ast}_t),L^{\pi_{b^*}^\ast}_t\}$,
 where, $  ( R^{\pi_{b^*}^\ast}_t,L^{\pi_{b^*}^\ast}_t     )  $ is
 determined uniquely by (\ref{37}). The inequality (\ref{38})
 is a direct consequence of (\ref{flemma41}).  $ \Box$
 \vskip 10pt\noindent
{\bf Proof of lemma \ref{lemma4.1}}.  \  The proof of this lemma is the same as that of Theorem 3.1 in
 the \cite{HHL}, we omit it here.  $ \Box$
\vskip 10pt\noindent
 {\bf Proof of lemma \ref{lemma4.2}}. \
Using the same argument as in the  proof of theorem 3.1 in the
\cite{HHL},
 we have for some $n>3$ and   large  $ b \geq
\max\{1,m^n \}$
\begin{eqnarray}\label{4.10}
\mathbb{P}[\tau^{b}_{\sqrt[n]{b} } \leq T ] \geq \mathbb{P}[\tau_b^b
\leq T].
\end{eqnarray}
Let  $R_t^{(2)}$ be the unique solution of the following SDE
\begin{eqnarray}\label{4.11}
&&dR_t^{(2)}=(a^\ast(R_t^{(2)})\mu+rR_t^{(2)})dt+
\sqrt{{a^\ast}^2(R_t^{(2)})\sigma^2+\sigma_p^2
{R_t^{(2)}}^2}dW_t,\nonumber
\\
&&R_0^{(2)}=\sqrt[n]{b}.
\end{eqnarray}
Then by comparison theorem on SDE ( see Ikeda and Watanabe
\cite{s639}(1981))
$$\{ \tau_{\sqrt[n]{b}}^{\pi^*_b}\leq  T\}\subseteq\{ \exists\  t\leq T
\ \mbox{such that}\ R_t^{(2)}=m\ \mbox{ or}\ R_t^{(2)}=b\}.$$ As a
result,
\begin{eqnarray}\label{4.12}
\mathbb{P}\{ \tau_{ \sqrt[n]{b}   }^{\pi^*_b}\leq  T\}&\leq&
\mathbb{P}\{\exists\ t\leq T \ \mbox{such
that}\ R_t^{(2)}=m\ \mbox{ or}\ R_t^{(2)}=b\}\nonumber\\
&\leq& \mathbb{P}\{\sup_{0\leq t\leq T}R_t^{(2)}\geq
b\}+\mathbb{P}\{\inf_{0\leq t\leq T}R_t^{(2)}\leq m\}.
\end{eqnarray}
\vskip 10pt\noindent Firstly, we estimate $ \mathbb{P}\{\sup_{0\leq
t\leq T}R_t^{(2)}\geq b\}$.\vskip 10pt\noindent
 Using H\"{o}lder
inequality and $a^*(x)\leq 1$, it follows from SDE (\ref{4.11}) that
\begin{eqnarray}\label{c18}
\sup_{0\leq t\leq T}(R_t^{(2)})^2 &\leq & [ 3(  \sqrt[n]{b})^2  +
6\mu^2T^2] + 6r^2T \int^T_0\sup_{0\leq s\leq t}(R_s^{(2)})^2ds \nonumber\\
&+&3 \sup_{0\leq t\leq T}\big (  \int^t_0
\sqrt{{a^\ast}^2(R_s^{(2)}\big )\sigma^2+\sigma_p^2
{R_s^{(2)}}^2}dW_s)^2.
\end{eqnarray}
Taking mathematical expectation at both sides of (\ref{c18}) and
using B-D-G inequality, we derive
\begin{eqnarray}\label{c57}
\mathbb{E}\{\sup_{0\leq t\leq T}(R_t^{(2)})^2\} & \leq & [ 3(
\sqrt[n]{b})^2  + 6\mu^2T^2] + 6r^2T \int^T_0\mathbb{E}\{\sup_{0\leq
s\leq t}(R_s^{(2)})^2\}ds\nonumber \\
&&+12 \mathbb{E}\{\int^T_0 ({a^\ast}^2(R_s^{(2)})\sigma^2+\sigma_p^2
{R_s^{(2)}}^2)  dt\}\nonumber\\
&\leq &  \big  [ 3 (\sqrt[n]{b})^2  + 6\mu^2T^2 +12\sigma^2 T^2 \big
]\nonumber\\
&&+ 6( r^2T + 2\sigma^2_p ) \int^T_0\mathbb{E}\{\sup_{0\leq s\leq
t}(R_s^{(2)})^2\}ds.
\end{eqnarray}
Solving (\ref{c57}), we get
\begin{eqnarray}\label{4.13}
\mathbb{E}\{\sup_{0\leq t\leq T}(R_t^{(2)})^2\} & \leq & \big [( 3
\sqrt[n]{b})^2  + 6\mu^2T^2 +12\sigma^2 T^2 \big ]\exp\{6( r^2T +
2\sigma^2_p )T    \}.\nonumber\\
 \end{eqnarray}
 Combining  Markov inequality and the inequality (\ref{4.13}),  we conclude that
 \begin{eqnarray}\label{4.14}
\mathbb{P}\{\sup_{0\leq t\leq T}R_t^{(2)}\geq b\}&\leq & \frac{
  \mathbb{E}\{\sup_{0\leq t\leq T}(R_t^{(2)})^2\}            }{b^2}
\nonumber\\
  & \leq & \frac{  \big( 3 (\sqrt[n]{b})^2  + 6\mu^2T^2 +12\sigma^2 T^2 \big
)\exp\{6( r^2T + 2\sigma^2_p )T    \}      }{  b^2   }.\nonumber\\
 \end{eqnarray}
\vskip 10pt\noindent Secondly, we estimate $\mathbb{P}\{\inf_{0\leq
t\leq T} R_t^{(2)}\leq m\}$. \vskip 10pt\noindent
 Let  $ M_1 $ be a  martingale  defined  by
$$ M_1(t)=\int^t_0I_{\{s:   R_s^{(2)}\neq 0\}}
 \sqrt{\sigma^2  ( \frac{ a^\ast(R_s^{(2)}) }{R_s^{(2)}}    )^2 +
\sigma_p^2     } \quad dW_s.$$ Then we can rewrite the SDE
(\ref{4.11}) as follows,
$$ R_t^{(2)}=\big ( \sqrt[n]{b} +
\int^t_0(a^\ast(R_s^{(2)})\mu+rR_s^{(2)})ds\big )
+\int^t_sR_s^{(2)}dM_1(s).$$
 In view of  Proposition 2.3 of Chapter 9 in
\cite{DRMY},
$$  R_t^{(2)}=\mathcal{E}(M_1)_t \big ( \sqrt[n]{b}  +\int^t_0
\frac{(\mu a^\ast(R_s^{(2)})+rR_s^{(2)} )}{\mathcal{E}(M_1)_s   }
ds\big ),
$$
where $ \mathcal{E}(M_1)_t= \exp\{ M_1(t)-\frac{1}{2}<M_1>(t)\}  $
is an exponential martingale,  $<M_1>$ is the  bracket of $M_1$. So
the fact $\inf{\{f(t)g(t)\}}\geq \inf\{f(t)\}\inf\{g(t)\} $ for any
$f(t)\geq 0 $ and $g(t)\geq 0$ implies that
$$ \inf_{0\leq t \leq T}\{  R_t^{(2)}  \}
\geq \sqrt[n]{b} \inf_{0\leq t \leq T}\{\mathcal{E}(M_1)_t \}.$$ As
a result
$$ \mathbb{P}\{\inf_{0\leq t\leq T} R_t^{(2)}\leq m\}\leq \mathbb{P}\{\inf_{0\leq
t\leq T}\mathcal{E}(M_1)_{t}\leq \frac{m}{\sqrt[n]{b}}\} .      $$
Since $<M_1>_T\leq ( \lambda^2\sigma^2+ \sigma_p^2)T < + \infty$, we
have
\begin{eqnarray}\label{c20}
\lim_{b\longrightarrow \infty }\mathbb{P}\{\inf_{0\leq t\leq T}
R_t^{(2)}\leq m\}&\leq &\mathbb{P}\{\inf_{0\leq t\leq
T}\mathcal{E}(M_1)_{t}=0\}\nonumber\\
&\leq &\mathbb{P}\{\sup_{0\leq t\leq T}|M_1(t)|=+\infty\} .
\end{eqnarray}
By B-D-G inequalities, we get
$$\mathbb{E}\{ \sup_{0\leq t\leq T}|M_1(t)|^2 \}\leq
4 ( \lambda^2\sigma^2+ \sigma_p^2)T < +  \infty , $$ which implies
that
$$  \mathbb{P}\{\sup_{0\leq t\leq T}|M_1(t)|=+\infty\}=0. $$
Thus by (\ref{c20})
\begin{eqnarray}\label{4.15}
\lim_{b\longrightarrow \infty }\mathbb{P}\{\inf_{0\leq t\leq T}
\{R_t^{(2)}\}\leq m\}=0.
\end{eqnarray}
So  the inequalities  (\ref{4.10}), (\ref{4.12}), (\ref{4.14}) and
(\ref{4.15}) yield that
$$\lim_{b\rightarrow \infty}\mathbb{P}\{ \tau_{{b}}^{\pi^*_b}\leq  T\}=0.$$
 $\Box$
\vskip 10pt\noindent
\begin{Remark}
The proof of theorem 3.2 in \cite{HHL} seems  wrong, so we can use
the way proving Lemma \ref {lemma4.2}  to correct it. Theorem 3.2 in
the \cite{HHL} is indeed a direct consequence of Lemma \ref
{lemma4.2}.
\end{Remark}\vskip 10pt\noindent
{\bf Proof of lemma \ref{lemma4.3}}.\
Let $( R_{t}^{b},L_b(t) ) $ denote $ (  R^{\pi_{b}^\ast}_t,
L^{\pi_{b}^\ast}_t  )   $ defined by  SDE (\ref{c21}). Since $
(R_{t}^{b},L_b(t) )$ is continuous process, by the generalized
It\^{o} formula,  we have
\begin{eqnarray}\label{4.17}
\phi(T-(t\wedge\tau_{x}^{b}),R_{t\wedge\tau_{x}^{b}}^{b})
&=&\phi(T,x)\nonumber\\
&+&\int_{0}^{t\wedge\tau_{x}^{b}}\{\frac{1}{2}[{a^\ast}^2(R_s
^b)\sigma^2+\sigma_p^2 \cdot (R_s ^b)^2]
\sigma^{2}\phi_{xx}(T-s,R_{s}^{b})\nonumber\\
&+&[ a^{*}(R_{s}^{b})\mu+rR_{s}^{b}]
\phi_{x}(T-s,R_{s}^{b})\nonumber\\
&-&\phi_{t}(T-s, R_{s}^{b})\}ds-\int_{0}^{t\wedge\tau_{x}^{b}}\phi_{x}(T-s,R_{s}^{b})dL_b(s)\nonumber\\
&+&\int_{0}^{t\wedge\tau_{x}^{b}}
a(R_{s}^{b})\sigma\phi_{x}(T-s,R_{s}^{b})dW_{s}.
\end{eqnarray}
 Letting  $t=T$
and taking mathematical expectation at both sides of (\ref{4.17})
yields that
\begin{eqnarray*}
\phi(T,x)&=&\mathbb{E}[\phi(T-(T\wedge\tau_{x}^{b}),R_{T\wedge\tau_{x}^{b}}^{b})]\nonumber\\
&=&\mathbb{E}[\phi(0,R_{T}^{b})1_{T<\tau_{x}^{b}}]+\mathbb{E}
[\phi(T-\tau_{x}^{b},m)1_{T\geq\tau_{x}^{b}})]\nonumber\\
&=&\mathbb{E}[1_{T<\tau_{x}^{b}}]=\mathbb{P}[\tau_{x}^{b}>T]=1-\psi(T,x).
\end{eqnarray*} $\Box$
\vskip 10pt\noindent
Now we use  PDE method to prove lemma \ref{lemma4.4}.
\vskip 10pt\noindent
{\bf Proof of lemma \ref{lemma4.4}}. \  Let  $x=by,0\leq y \leq 1$ and
$\theta^{b}(t,y)=\phi^{b}(t,(b-m)y+m)$. Then the equation
(\ref{4.16}) becomes
\begin{eqnarray}\label{eq4.12}
\left\{
\begin{array}{l l l}
\theta_t^{b}(t,y)=[\sigma[(b-m)y+m]/(b-m)^2]\theta_{yy}^{b}(t,y)\\+[\mu[(b-m)y+m]/(b-m)]\theta_y^{b}(t,y),\\
\theta^{b}(0,y)=1, \ \mbox{for}\  0\leq y\leq 1, \\
\theta^{b}(t,0)=0,\theta_{y}^{b}(t,1)=0,\ \mbox{for} \ t>0.
\end{array}\right.
\end{eqnarray}
In view of (\ref{eq4.12}), the proof of Lemma \ref{lemma4.4} reduces
to proving $\lim\limits_{b_2\rightarrow
b_1}\theta^{b_2}(t,1)=\theta^{b_1}(t,1)$ for fixed $b_1>b_0$. Let
$w(t,y)=\theta^{b_2}(t,y)-\theta^{b_1}(t,y)$. Since
$\theta^{b}(t,y)$ is continuous at $y=1$ for any $b>b_0$, we only
need to show that
 \begin{eqnarray}\label{eq4.13}
 \int_0^t\int_0^1 w^2(s,y)dsdy
\rightarrow 0, \mbox {as $ b_2\rightarrow b_1$}.
\end{eqnarray}
Let $\sigma^b(y)=\sigma[(b-m)y+m]/(b-m)^2$,
 $\mu^b(y)=\mu[(b-m)y+m]/(b-m)$. Then the (\ref{eq4.12}) translates into
 \begin{eqnarray}\label{eq4.15}
\left\{
\begin{array}{l l l}
w_t(t,y)&=&\sigma^{b_2}(y)w_{yy}(t,y)+\mu^{b_2}(y)w_{y}(t,y)\\
&+& \{\sigma^{b_2}(y)-\sigma^{b_1}(y)\}\theta_{yy}^{b_1}(t,y)\\
&+&\{\sigma^{b_2}(y)-\sigma^{b_1}(y)\}\theta_y^{b_1}(t,y),\\
w(0,y)&=&0,\ \mbox{for}\ 0<y\leq 1,\\
w(t,y)&=&0,y=0, \ w_y(t,1)=0,\ \mbox{for}\ t>0.
\end{array}\right.
\end{eqnarray}
 Multiplying both sides of the first equation in
(\ref{eq4.15}) by $w(t,z)$, and then  integrating both sides of the
resulting equation on $[0,t]\times [0,1]$, we get
\begin{eqnarray}\label{eq4.16}
&&\int_0^{t}\int_0^1 w(s,y)w_t(s,y)dyds
\nonumber\\
&=&\int_0^{t}\int_0^1 [\sigma^{b_2}(y)]
w(s,y)w_{yy}(s,y)dyds\nonumber\\
&+&\int_0^{t}\int_0^1[\mu^{b_2}(y)]w(s,y)w_y(s,y)dyds\nonumber\\
 &+&\int_0^{t}\int_0^1[\sigma^{b_2}(y)-\sigma^{b_1}(y)]w(s,y)\theta_{yy}^{b_1}(t,y)dyds\nonumber\\
  &+&\int_0^{t}\int_0^1w(s,y)[\mu^{b_2}(y)-\mu^{b_1}(y)]w(s,y)\theta_y^{b_1}(t,y)dyds\nonumber\\
&\equiv & E_1 +E_2+E_3+E_4.
\end{eqnarray}
Now we look at terms at both sides of (\ref{eq4.16}). \vskip
10pt\noindent Firstly, we have
\begin{eqnarray}\label{eq4.17}
\int_0^{t}\int_0^1 w(s,y)w_t(s,y)dyds&=&\int_0^1
\frac{1}{2}w^2(t,y)dy.
\end{eqnarray}
Secondly, we deal with terms $E_i$, $i=1,\cdots , 4$ as follows.
\vskip 10pt\noindent
 It is easy
to see from the expression of $a^\ast(\cdot)$ that there exist
positive constants $D_1$, $D_2$ and $D_3$ such that
$[\mu(b_2y)/b_2]^2\leq D_1$ and $\sigma^b_2(y)\geq D_2>0$ for $y\geq
0$, and
 $\sigma^b_2(y)'\leq D_3$
 for $y\in (0,\frac{x_0-m}{b-m})\cup
(\frac{x_0-m}{b-m},1]$. As a result, for any $\lambda_1 >0 $ and
$\lambda_2 >0$
\begin{eqnarray}\label{eq4.18}
E_1&=&\int_0^{t}\int_0^1[\sigma^{b_2}(y)]w(s,y)w_{yy}(s,y)dyds\nonumber\\
&=&-\int_0^{t}\int_0^1 [\sigma^{b_2}(y)]w_y^2(s,y)dyds\nonumber\\
&&-
\int_0^{t}[\int_0^{(x_0-m)/(b-m)}+
\int_{(x_0-m)/(b-m)} ^1][\sigma^{b_2}(y)]^{'}w_y(s,y)w(s,y)dyds\nonumber\\
&\leq & -D_2\int_0^{t}\int_{0}^1
w_y^2(s,y)dyds\nonumber\\
&& +D_3\int_0^{t}\int_0^1 \lambda_1 w_y^2(s,y)+\frac{1}{4\lambda_1}
w^2(s,y)dyds
\end{eqnarray}
and
\begin{eqnarray}\label{eq4.19}
E_2&=&\int_0^{t}\int_0^1[\mu^{b_2}(y)]w(s,y)w_y(s,y)\nonumber\\
&\leq
&\lambda_2\int_0^{t}\int_0^1 w_y^2(s,y)dyds\nonumber\\
&&+\frac{D_1}{4\lambda_2}\int_0^{t}\int_0^1 w^2(s,y)dyds.
\end{eqnarray}
In order to estimate  $E_3$, we decompose $E_3 $ as follows:
\begin{eqnarray}\label{eq4.20}
E_3&=&\int_0^t \int_0^1 \{\sigma^{b_2}(y)-\sigma^{b_1}(y)\}w(s,y)\theta_{yy}^{b_1}(s,y)dyds\nonumber\\
&=&-\int_0^t \int_0^1 \{\sigma^{b_2}(y)-\sigma^{b_1}(y)\}w_y(s,y)\theta_{y}^{b_1}(s,y)dyds\nonumber\\
&&-\int_0^t \int_{0}^{(x_0-m)/(b_2-m)}\{\sigma^{b_2}(y)-\sigma^{b_1}(y)\}'w(s,y)\theta_{y}^{b_1}(s,y)dyds\nonumber\\
&&-\int_0^t \int_{(x_0-m)/(b_2-m)}^{(x_0-m)/(b_1-m)}\{\sigma^{b_2}(y)-\sigma^{b_1}(y)\}'w(s,y)\theta_{y}^{b_1}(s,y)dyds\nonumber\\
&&-\int_0^t \int_{(x_0-m)/(b_1-m)}^{1}\{\sigma^{b_2}(y)-\sigma^{b_1}(y)\}'w(s,y)\theta_{y}^{b_1}(s,y)dyds\nonumber\\
&=& E_{31}+E_{32}+E_{33}+E_{34}.
\end{eqnarray}
So the estimating $E_3$ is reduced to estimating  $E_{3i}$,
$i=1,\cdots, 4$. \vskip 10pt\noindent
The fact $[\sigma(by)/b^2]$,
$[\sigma(by)/b^2]'$and $[\mu(by)/b]$ are Lipschitz continuous on
$(0,\frac{x_0-m}{b_2-m})$,
$(\frac{x_0-m}{b_2-m},\frac{x_0-m}{b_1-m})$ and
$(\frac{x_0-m}{b_1-m},1)$, that is,  there exists  $L>0$ such that
\begin{eqnarray*}
&&|[\sigma^{b_2}(y)]-[\sigma^{b_1}(y)]|\leq L|b_2-b_1|,\\
&&|[\sigma^{b_2}(y)]'-[\sigma^{b_1}(y)]'|\leq L|b_2-b_1|,\\
&& |[\mu^{b_2}(y)]-[\mu^{b_1}(y)]|\leq L|b_2-b_1|,
\end{eqnarray*}
and Young's inequality yield that for any $\lambda_3>0$ and
$\lambda_4
>0$
\begin{eqnarray}\label{eq4.21}
 E_{31}&=& -\int_0^t\int_0^1
[\sigma^b_2(y)-\sigma^b_1(y)]w_y(s,y)\theta_{y}^{b_1}(s,y)dyds\nonumber\\
&\leq &\frac{L^2(b_2-b_1)^2}{4\lambda_3}\int_0^t\int_0^1
[\theta_{y}^{b_1}(s,y)]^2dyds\nonumber\\
&& +\lambda_3\int_0^t\int_0^1 w_y^2(s,y)dyds,
\end{eqnarray}
\begin{eqnarray}\label{eq4.22}
E_{32}+E_{34}&=&
-\int_0^t\big[\int_{0}^{(x_0-m)/(b_2-m)}+\int_{(x_0-m)/(b_1-m)}^{1}\big]
\{\sigma^{b_2}(y)\nonumber\\&-&\sigma^{b_1}(y)\}'w(s,y)\theta_{y}^{b_1}(s,y)dyds\nonumber\\
&\leq &\frac{L^2(b_2-b_1)^2}{4\lambda_4}\int_0^t\int_0^1
[\theta_{y}^{b_1}(s,y)]^2dyds\nonumber\\
&& +\lambda_4\int_0^t\int_0^1 w^2(s,y)dyds.
\end{eqnarray}
The remaining part of estimating $E_3$ is to deal with
$E_{33}$.\vskip 10pt\noindent
By the boundary conditions
\begin{eqnarray*}
0&=&\int_0^t\int_0^1 \theta_{t}^{b}(s,y)\theta^{b}(s,y)\\
&& -\sigma^b(y)\theta_{yy}^{b}(s,y)\theta^{b}(s,y)
-\mu^b(y)\theta_{y}^{b}(s,y)\theta^{b}(s,y)dyds\\
&=&\frac{1}{2}\int_0^1 [\theta^{b}(s,y)]^2dy
+\int_0^t\int_0^1 \sigma^b(y) [\theta_{y}^{b}(s,y)]^2dyds\\
&&+\int_0^t\big[\int_0^{(x_0-m)/(b-m)}+\int_{(x_0-m)/(b-m)}^1\big]
\sigma^b(y)'[\theta_{y}^{b}(s,y)][\theta^{b}(s,y)]dyds\\
&=&\frac{1}{2}\int_0^1 [\theta^{b}(s,y)]^2dy
+\int_0^t\int_0^1 \sigma^b(y) [\theta_{y}^{b}(s,y)]^2dyds\\
&&+\int_0^t\big[\int_0^{(x_0-m)/(b-m)}+\int_{(x_0-m)/(b-m)}^1\big]
(\sigma^b(y)'\\
&&-\mu^b(y))[\theta_{y}^{b}(s,y)][\theta^{b}(s,y)]dyds\\
&\geq & \lambda_5 \int_0^t\int_0^1 [\theta_{y}^{b}(s,y)]^2dyds
-\frac{\lambda_5 }{2}\int_0^t\int_0^1 [\theta_{y}^{b}(s,y)]^2dyds\\
&&-\frac{\lambda_6}{2\lambda_5}\int_0^t\int_0^1 [\theta^{b}(s,y)]^2dyds\\
&\geq &\frac{\lambda_5}{2}\int_0^t\int_0^1
[\theta_{y}^{b}(s,y)]^2dyds -\frac{\lambda_6}{2\lambda_5},
\end{eqnarray*}
from which we know that
\begin{eqnarray*}
\int_0^t\int_0^1 [\theta_{y}^{b}(s,y)]^2dyds\leq
\frac{\lambda_6}{\lambda_5 ^2},
\end{eqnarray*}
where $\lambda_5>0$ is the lower boundary of $\sigma^b(y)$ and
$\lambda_6$ is the upper boundary of
$|[\sigma(by)/b^2]'-[\mu(by)/b]|$ on $(0,\frac{x_0-m}{b-m})\cup
(\frac{x_0-m}{b-m},1]$.\vskip 10pt\noindent
 Therefore we conclude
that $\int_0^t\int_0^1 [\theta_{y}^{b}(s,y)]^2dyds$ is bounded. So
by using  $w(s,y)\leq 2$, we have
\begin{eqnarray}\label{eq4.23}
\lim_{b_2\rightarrow b_1} |E_{33}|=0.
\end{eqnarray}
Thus the equalities (\ref{eq4.21}),(\ref{eq4.22}) and (\ref{eq4.23})
yield that there exists a positive function $B_1^{b_1}(b_2)$ with $
\lim\limits_{b_2\rightarrow b_1}B_1^{b_1}(b_2)=0$ such that for $
0\leq t\leq T$
\begin{eqnarray}\label{eq4.24}
E_3&=&\int_0^t\int_{0}^{1}\{\sigma(b_2y)/b_2^2-\sigma(b_1y)/b_1^2\}
w(s,y)\theta_{yy}^{b_1}(t,y)dyds\nonumber\\
&=&E_{31}+E_{32}+E_{33}+E_{34}\nonumber\\
&\leq & B^{b_1}(b_2)+(\lambda_3+\lambda_4) \int_0^t\int_0^1
w_y^2(s,y)+w^2(s,y)dyds.\nonumber\\
\end{eqnarray}
By the same way as that  of (\ref{eq4.19})
\begin{eqnarray}\label{eq4.25}
E_4&=&\int_0^t\int_0^1 \{\mu(b_2y)/b_2-\mu(b_1y)/b_1\}w(s,y)
\theta_y^{b_1}(t,y)dyds\nonumber\\
&\leq &\frac{L^2(b_2-b_1)^2}{4\lambda_7}\int_0^t\int_0^1
[\theta_{y}^{b_1}(s,y)]^2dyds\nonumber\\
&& +\lambda_7\int_0^t\int_0^1 w^2(s,y)dyds.
\end{eqnarray}
Let
\begin{eqnarray*}
B_2^{b_1}(b_2)=\frac{L^2(b_2-b_1)^2}{4\lambda_7}\int_0^t\int_0^1
[\theta_{y}^{b_1}(s,y)]^2dyds.
\end{eqnarray*}
Then
\begin{eqnarray*}
\lim\limits_{b_2\rightarrow b_1}B_2^{b_1}(b_2)=0,
\end{eqnarray*}
which, together with (\ref{eq4.25}), implies that
\begin{eqnarray}\label{eq4.26}
E_4\leq   B_2^{b_1}(b_2)+\lambda_7\int_0^t\int_0^1 w^2(s,y)dyds.
\end{eqnarray}
Choosing  $\lambda_1$, $\lambda_2$ and $\lambda_3$ small enough such
that$\lambda_1+D_3\lambda_2 +\lambda_3<D_2$, we can conclude from
$(\ref{eq4.16}) $, (\ref{eq4.18}), (\ref{eq4.19}), (\ref{eq4.24})
and (\ref{eq4.26}) that there exist constants $C_1$ and $C_2$ such
that
\begin{eqnarray*}
\int_0^1 w^2(t,y)dy \leq C_1\int_0^t \int_0^1
w^2(s,y)dyds+C_2[B_1^{b_1}(b_2)+B_2^{b_1}(b_2)].
\end{eqnarray*}
Using the Gronwall inequality, we get
\begin{eqnarray*}
\int_0^t\int_0^1 w^2(s,y)dyds\leq
C_2[B_1^{b_1}(b_2)+B^{b_1}(b_2)]\exp\{C_1t\}.
\end{eqnarray*}
So
\begin{eqnarray*}
\lim\limits_{b_2\rightarrow
b_1}\int_0^t\int_0^1[\theta^{b_2}(s,y)-\theta^{b_1}(s,y)]^2dyds=0.
\end{eqnarray*}
Thus we complete the proof.\  $\Box$
\vskip 10pt\noindent
{\bf Proof of lemma \ref{lemma2}}.\  If  $x < m $ then  by (\ref{32}), $F_b(x)=0$.
It suffices to prove (\ref{fl14}) for $m \leq x  $. If $m\leq x\leq
b $, then
\begin{eqnarray*}
\mathcal {L} F_b(x)=\frac{\mathcal {L}h(x)}{h'(b)},
\end{eqnarray*}
here $h(x)$ is a solution of (\ref{31}), so $\mathcal {L} F_b(x)
\leq 0$ follows from lemma \ref{lemma1}.  If $x>b $ then by using
$F_{b}''(b)\geq 0$ for $b\geq b_0$
\begin{eqnarray*}\label{f50}
\mathcal {L}F_{b}(x)&=&\frac{1}{2}(a^2 \sigma ^2 +\sigma_p^2
x^2)F_b''(x)+(a \mu +rx)F_b'(x)-cF_b(x)\nonumber\\
&\leq &(\mu+rx)-c(x-b+F_{b}(b))\nonumber\\
&\leq&(\mu+rb)-cF_{b}(b)\nonumber\\
 &=&\mathcal {L}F_{b}(b)-\frac{1}{2}(a^2 \sigma ^2 +\sigma _p ^2x^2)F_{b}''(b)\nonumber\\
 &\leq& 0
\end{eqnarray*}
 Thus the proof follows.\ $\Box$
 \vskip 10pt\noindent
{\bf Proof of lemma \ref{lemma3}}. \  The proof basically follows  the same arguments as in  the proof of
theorem 5.2 in He and Liang \cite{he and Liang} and so we omit it.\
$\Box$
\vskip 10pt\noindent
{\bf Proof of lemma \ref{lemma4}}.\
The lemma is a direct consequence of lemma \ref{lemma2}
and lemma \ref{lemma3}. $\Box$
\vskip 20pt\noindent
 {\bf Acknowledgements.} This work is supported
by Project 10771114 of NSFC, Project 20060003001 of SRFDP, the SRF
for ROCS, SEM  and the Korea Foundation for Advanced Studies. We
would like to thank the institutions for the generous financial
support. We are very grateful to the referees for the careful reading of the manuscript, correction of errors,  and valuable suggestions which improved the main results of this paper very much. Special thanks also go to the participants of the seminar stochastic analysis, finance and insurance at Tsinghua University for their
feedbacks and useful conversations. Zongxia Liang is also very
grateful to College of Social Sciences and College of Engineering at
Seoul National University for providing excellent working conditions
for him. The authors also thank Jicheng Yao for very valuable
discussions on lemma \ref{lemma4.4}.
\vskip 10pt\noindent
\setcounter{equation}{0}

\end{document}